\newtheorem{obs}{Observation}
\newtheorem{cor}{Corollary}
\newtheorem{reducerule}{Reduction Rule}
\newtheorem{branchrule}[reducerule]{Branching Rule}
\newcommand{\myparagraph}[1]{\par\smallskip\par\noindent{\bf{}#1:~}}
\newcommand{\comment}[1]{}
\newcommand{\alg}[1]{\mbox{\sf #1}}  
\begin{document}

\mainmatter

\title{Maximization Problems Parameterized\\Using Their Minimization Versions:\\The Case of Vertex Cover}

\author{Meirav Zehavi}

\institute{Department of Computer Science, Technion - Israel Institute of Technology,
 Haifa 32000, Israel\\
\mails}

\maketitle

\begin{abstract}
The parameterized complexity of problems is often studied with respect to the size of their optimal solutions. However, for a maximization problem, the size of the optimal solution can be very large, rendering algorithms parameterized by it inefficient. Therefore, we suggest to study the parameterized complexity of maximization problems with respect to the size of the optimal solutions to their {\em minimization} versions. We examine this suggestion by considering the {\sc Maximal Minimal Vertex Cover (MMVC)} problem, whose minimization version, {\sc Vertex Cover}, is one of the most studied problems in the field of Parameterized Complexity. Our main contribution is a parameterized approximation algorithm for {\sc MMVC}, including its weighted variant. We also give conditional lower bounds for the running times of algorithms for {\sc MMVC} and its weighted variant.
\end{abstract}

\section{Introduction}

The parameterized complexity of problems is often studied with respect to the size of their optimal solutions. However, for a maximization problem, the size of the optimal solution can be very large, rendering algorithms parameterized by it inefficient. Therefore, we suggest to study the parameterized complexity of maximization problems with respect to the size of the optimal solutions to their {\em minimization} versions. Given a maximization problem, the optimal solution to its minimization version might not only be significantly smaller, but it might also be possible to efficiently compute it by using some well-known parameterized algorithm---in such cases, one can know in advance if for a given instance of the maximization problem, the size of the optimal solution to the minimization version is a good choice as a parameter. Furthermore, assuming that an optimal solution to the minimization version can be efficiently computed, one may use it to solve the maximization problem; indeed, the optimal solution to the maximization problem and the optimal solution to its minimization version may share useful, important properties.

We examine this suggestion by studying the {\sc Maximal Minimal Vertex Cover (MMVC)} problem. This is a natural choice---the minimization version of {\sc MMVC} is the classic {\sc Vertex Cover (VC)} problem, one of the most studied problems in the field of Parameterized Complexity. In {\sc Weighted MMVC (WMMVC)}, we are given a graph $G=(V,E)$ and a weight function $w: V\rightarrow\mathbb{R}^{\geq 1}$. We need to find the maximum weight of a set of vertices that forms a {\em minimal} vertex cover of $G$. The {\sc MMVC} problem is the special case of {\sc WMMVC} in which $w(v)=1$ for all~$v\in V$.

\myparagraph{Notation} Let $vc$ ($vc_w$) denote the size (weight) of a minimum vertex cover (minimum weight vertex cover) of $G$, and let $opt$ ($opt_w$) denote the size (weight) of a maximal minimal vertex cover (maximal-weight minimal vertex cover) of $G$. Clearly, $vc\leq \min\{vc_w,opt\}\leq\max\{vc_w,opt\}\leq opt_w$. Observe that the gap between $vc_w=\max\{vc,vc_w\}$ and $opt=\min\{opt,opt_w\}$ can be very large. For example, in the case of {\sc MMVC}, if $G$ is a star, then $vc_w=1$ while $opt=|V|-1$.

The standard notation $O^*$ hides factors polynomial in the input size. A problem is {\em fixed-parameter tractable (FPT)} with respect to a parameter $k$ if it can be solved in time $O^*(f(k))$, for some function $f$. The maximum degree of a vertex in $G$ is denoted by $\Delta$. Given $v\in V$, $N(v)$ denotes the set of neighbors of $v$.
Also, let $\gamma$ denote the smallest constant such that it is known how to solve {\sc VC} in time $O^*(\gamma^{vc})$, using polynomial space. Currently, $\gamma < 1.274$ \cite{vc2010}.
Given $U\subseteq V$, let $G[U]$ denote the subgraph of $G$ induced by $U$. Finally, let $M$ denote the maximum weight of a vertex in $G$.

\subsection{Related Work}

\myparagraph{MMVC} Boria et al.~\cite{MaxMinVC} show that {\sc MMVC} is solvable in time $O^*(3^m)$ and polynomial space, and that {\sc WMMVC} is solvable in time and space $O^*(2^{tw})$, where $m$ is the size of a maximum matching of $G$, and $tw$ is the treewidth of $G$. Since $\max\{m,tw\}\leq vc$ (see, e.g., \cite{paramEco}), this shows that {\sc WMMVC} is FPT with respect to $vc$. Moreover, they prove that {\sc MMVC} is solvable in time $O^*(1.5874^{opt})$ and polynomial space, where the running time can be improved if one is interested in approximation.\footnote{For example, they show that one can guarantee the approximation ratios $0.1$ and $0.4$ in times $O^*(1.162^{opt})$ and $O^*(1.552^{opt})$, respectively.} Boria et al.~\cite{MaxMinVC} also prove that for any constant $\epsilon>0$, {\sc MMVC} is inapproximable within ratios $O(|V|^{\epsilon-\frac{1}{2}})$ and $O(\Delta^{\epsilon-1})$, unless P=NP. They complement this result by proving that {\sc MMVC} is approximable within ratios $|V|^{-\frac{1}{2}}$ and $\frac{3}{2\Delta}$ in polynomial time.

Recently, Bonnet and Paschos \cite{corrSparse} and Bonnet et al.~\cite{corrMaxMinVC} obtained results relating to the inapproximability of {\sc MMVC} in subexponential time. Furthermore, Bonnet et al.~\cite{corrMaxMinVC} prove that for any $1<r\leq |V|^{\frac{1}{2}}$, {\sc MMVC} is approximable within ratio $\frac{1}{r}$ in time $O^*(2^{\frac{|V|}{r^2}})$.

\myparagraph{Vertex Cover} First, note that {\sc VC} is one of the first problems shown to be FPT. In the past two decades, it enjoyed a race towards obtaining the fastest parameterized algorithm (see \cite{vc1993,vc1995,vc1998,vc1999a,vc1999b,vc2001,wvc2003,vc2005,vc2007,vc2010}). The best parameterized algorithm, due to Chen et al.~\cite{vc2010}, has running time $O^*(1.274^{vc})$, using polynomial space. In a similar race \cite{vc2001,3vc2000,3vc2005,3vc2009,3vc2010,3vc2013}, focusing on the case where $\Delta=3$, the current winner is an algorithm by Issac et al.~\cite{3vc2013}, whose running time is $O^*(1.153^{vc})$. For {\sc Weighted VC (WVC)}, parameterized algorithms were given in \cite{wvc2003,wvc2006,wvc2009,corrweighted}. The fastest ones (in \cite{corrweighted}) use time $O^*(1.381^{vc_w})$ and polynomial space, time and space $O^*(1.347^{vc_w})$, and time $O^*(1.443^{vc})$ and polynomial space.

Kernels for {\sc VC} and {\sc WVC} were given in \cite{vc2001,wvcker2008,wvcker2013}, and results relating to the parameterized approximability of {\sc VC} were also examined in the literature (see, e.g., \cite{parApproxVC3,parApproxVC2,parApproxVC1}). Finally, in the context of Parameterized Complexity, we would also like to note that $vc$ is a parameter of interest; indeed, apart from {\sc VC}, there are other problems whose parameterized complexity was studied with respect to this parameter (see, e.g., \cite{paramVC2,paramVC1,paramVC3}).

\subsection{Our Contribution}

While it can be easily seen that {\sc WMMVC} is solvable in time $O^*(2^{vc})$ and polynomial space (see Section \ref{sec:simple}), we observe that this result might be essentially tight. More precisely, we show that even if $G$ is a bipartite graph and $w(v)\in\{1,1+\frac{1}{|V|}\}$ for all $v\in V$, an algorithm that solves {\sc WMMVC} in time $O^*((2-\epsilon)^{vc_w})$, which upper bounds $O^*((2-\epsilon)^{vc})$, contradicts the SETH (Strong Exponential Time Hypothesis). We also show that even if $G$ is a bipartite graph, an algorithm that solves {\sc MMVC} in time $O^*((2-\epsilon)^{\frac{vc}{2}})$ contradicts the SETH.

Then, we turn to present our main contribution, \alg{ALG}, which is a parameterized approximation algorithm for {\sc WMMVC} with respect to the parameter $vc$. More precisely, we prove the following theorem, where $\alpha$ is a user-controlled parameter that corresponds to a tradeoff between time and approximation ratio.

\begin{theorem}\label{theorem:approx}
For any $\alpha < \displaystyle{\frac{1}{2-\frac{1}{M+1}}}$ such that $\displaystyle{\frac{1}{x^x(1-x)^{1-x}}}\geq 3^{\frac{1}{3}}$ where $x=1-\frac{1-\alpha}{M(2\alpha-1)+1-\alpha}$,\footnote{In particular, the result holds for any $\displaystyle{\frac{1}{2} < \frac{1}{2 - \frac{1}{7.35841\cdot M+1}}\leq \alpha < \frac{1}{2-\frac{1}{M+1}}}$.} \alg{ALG} runs in time $\displaystyle{O^*((\frac{1}{x^x(1-x)^{1-x}})^{vc})}$, returning a minimal vertex cover of weight at least $\alpha\cdot opt_w$. \alg{ALG} has a polynomial space complexity.
\end{theorem}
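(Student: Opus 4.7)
First, I would compute a minimum vertex cover $C$ of $G$ using the algorithm of Chen et al.~\cite{vc2010} in time $O^*(1.274^{vc})$, and set $I := V\setminus C$. Since $C$ is minimum, every vertex of $C$ has a neighbor in $I$ (by a Hall-type argument, there is in fact a matching between $C$ and $I$ saturating $C$).

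The structural reduction I would use is that any minimal vertex cover $S$ of $G$ is uniquely determined by $B := (V\setminus S)\cap C$: writing $D = V\setminus S$, the independence of $D$ gives $D\cap I\subseteq I\setminus N(B)$, and the maximality of $D$ (equivalent to minimality of $S$) forces the reverse inclusion, so $D = B\cup (I\setminus N(B))$. Hence $S$ is a valid minimal vertex cover iff (i) $B$ is independent in $G[C]$ and (ii) every $v\in C\setminus B$ has a neighbor in $B\cup (I\setminus N(B))$, with weight $w(S) = w(C)-w(B)+w(I\cap N(B))$.

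Given this reduction, \alg{ALG} would enumerate candidate sets $B\subseteq C$ with $|B|\leq \lceil x\cdot vc\rceil$ (and, by the obvious symmetry, candidates with $|C\setminus B|\leq \lceil x\cdot vc\rceil$), verify (i)--(ii), evaluate $w(S)$, and return the best $S$ found. Iterating one candidate at a time gives polynomial space. The number of candidates is $O^*\bigl(\binom{vc}{\lceil xvc\rceil}\bigr) = O^*\bigl((1/(x^x(1-x)^{1-x}))^{vc}\bigr)$, and the assumed condition $1/(x^x(1-x)^{1-x})\geq 3^{1/3}$ ensures this dominates both the $O^*(3^{vc/3})$ Moon--Moser bound (which would cover any auxiliary maximal--independent--set enumeration in $G[C]$) and the $O^*(1.274^{vc})$ preprocessing, yielding the claimed runtime.

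The approximation ratio is the main work and the main obstacle. Fix an optimal minimal vertex cover $S^*$ and set $B^* := C\setminus S^*$. If $\min\{|B^*|, |C\setminus B^*|\}\leq \lceil xvc\rceil$, then $B^*$ is itself enumerated and \alg{ALG} returns an exact solution. In the remaining ``medium'' case, both sides of the partition exceed $\lceil xvc\rceil$, and I would exhibit an enumerated $B$ of size $\lceil xvc\rceil$---the natural candidate being a subset of $B^*$ chosen to preserve condition (ii) (so that every dropped vertex retains a private $I$-neighbor outside $N(B)$) and to cover as much of $w(I_{S^*})=w(I\cap N(B^*))$ as possible. Using $w(v)\in [1,M]$, the coverage loss $w(I_{S^*})-w(I\cap N(B))$, of magnitude at most $M\cdot |B^*\setminus B|$, is charged against the added $A$-weight $w(B^*\setminus B)\geq |B^*\setminus B|$ now in the returned cover, and the ratio $\alpha$ is attained precisely when the tradeoff $(1-x)M(2\alpha-1)=x(1-\alpha)$ holds---the identity that defines the $x$ in the theorem. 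The delicate points I expect to wrestle with are preserving condition (ii) when dropping vertices from $B^*$ (naively omitted vertices can lose all their private $I$-neighbors to $N(B)$) and carrying out the weight bookkeeping tightly enough to recover the exact identity above, which is what pins down $x$ and, via $1/(x^x(1-x)^{1-x})\geq 3^{1/3}$, the achievable range of $\alpha$.
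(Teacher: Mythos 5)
Your setup---parameterizing minimal vertex covers by their trace on a minimum vertex cover $C$, so that $S$ is determined by $B=(V\setminus S)\cap C$ via $V\setminus S=B\cup(I\setminus N(B))$, enumerating one tail of sizes of $B$ in time $O^*\bigl(\binom{vc}{\lceil x\,vc\rceil}\bigr)$, and recovering $S^*$ exactly when $|B^*|\le x\cdot vc$---is essentially the paper's Reduction Rule~3 (the family $\cal F$). The gap is in the ``medium'' case, and it is genuine, not just a delicate point to be tightened.

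First, the specific charging claim is false: the coverage loss $w(I\cap N(B^*))-w(I\cap N(B))=w\bigl(I\cap(N(B^*)\setminus N(B))\bigr)$ is \emph{not} bounded by $M\cdot|B^*\setminus B|$, because a single vertex of $B^*\setminus B$ may have arbitrarily many private neighbors in $I$, and $|I|=|V|-vc$ is unbounded in terms of $vc$. Second, and more fundamentally, your candidate moves in the wrong direction: any $B\subseteq B^*$ yields $S_B\cap I=I\cap N(B)\subseteq I\cap N(B^*)=S^*\cap I$, so $S_B$ can only surrender $I$-weight relative to $S^*$, yet $w(S^*\cap I)$ may dominate $w(S^*)$ entirely (e.g.\ when almost all of the weight sits on $I$). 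No subset of $B^*$ of size $\lceil x\,vc\rceil$ can then be an $\alpha$-approximation, so the identity $(1-x)M(2\alpha-1)=x(1-\alpha)$---which you correctly reverse-engineered---cannot be reached along this route.

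The paper resolves the medium case by going the opposite way: it constructs a minimal vertex cover $\widetilde A$ with $\widetilde A\cap C\subseteq S^*\cap C$ (equivalently, $U\setminus\widetilde A\supseteq B^*$), which forces $S^*\setminus C\subseteq\widetilde A\setminus C$; then $w(S^*)\le w(\widetilde A)+w(S^*\cap C)-w(\widetilde A\cap C)$ and $\max\{w(C),w(\widetilde A)\}\ge\frac{w(C)}{w(C)+w(S^*\cap C)}\,w(S^*)$, so the entire charging happens inside $C$ ($|B^*|>x\,vc$ unit-weight vertices against at most $(1-x)vc$ vertices of weight at most $M$), and the unbounded $I$-weight is absorbed by $\widetilde A$. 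Producing such an $\widetilde A$ requires a minimal vertex cover of $G[C]$ contained in $S^*\cap C$, which the paper obtains by branching over the $2^{|U'|}$ traces of $S^*$ on a minimum vertex cover $U'$ of $G[C]$; this is why the parameter ``vertex cover of a vertex cover'' appears, why the $2^{vc/2}$ branching must be combined with the $\binom{vc}{x\,vc}$ enumeration at the leaves, and why a second procedure (with Peiselt-style rules and a $3^{vc/3}$ bound) is needed when $|U'|>vc/2$. None of this machinery is present in your proposal, and without a substitute for the domination $S^*\setminus C\subseteq\widetilde A\setminus C$, the approximation ratio does not follow.
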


For example, for the smallest possible $\alpha$, the running time is bounded by $O^*(3^{\frac{vc}{3}}) < O^*(1.44225^{vc})$, and for any constant $\alpha < \frac{1}{2-\frac{1}{M+1}}$, there is a constant $\epsilon > 0$ such that the running time is bounded by $O^*((2-\epsilon)^{vc})$.

\alg{ALG} is based is on a mix of two bounded search tree-based procedures.\footnote{Information on the bounded search technique, which is perhaps the most well-known technique used to design parameterized algorithms \cite{newdfsbook}, is given in Section \ref{sec:approx}.} In particular, the branching vectors of one of these procedures are analyzed with respect to the size of {\em a minimum vertex cover of a minimum vertex cover} of $G$. Another interesting feature of this procedure is that once it reaches a leaf (of the search tree), it does not immediately return a result, but to obtain the desired approximation ratio, it first performs a computation that is, in a sense, an exhaustive search. We would like to note that in the design of our second procedure, we integrate rules that are part of the iterative compression-based algorithm for {\sc VC} by Peiselt \cite{vc2007}. Since \alg{ALG} can be used to solve {\sc MMVC}, in which case $M=1$, we immediately obtain the following corollary.

\begin{cor}\label{cor:unweightedApprox}
For any $\alpha < \frac{2}{3}$ such that $\frac{1}{x^x(1-x)^{1-x}}\geq 3^{\frac{1}{3}}$ where $x=2-\frac{1}{\alpha}$,\footnote{In particular, the result holds for any $0.53183\leq \alpha < \frac{2}{3}$.} \alg{ALG} runs in time $O^*((\frac{1}{x^x(1-x)^{1-x}})^{vc})$, returning a minimal vertex cover of size at least $\alpha\cdot opt$. \alg{ALG} has a polynomial space complexity.
\end{cor}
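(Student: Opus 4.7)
The plan is straightforward: Corollary~\ref{cor:unweightedApprox} is obtained as a direct specialization of Theorem~\ref{theorem:approx}. Since in the unweighted \textsc{MMVC} problem we have $w(v)=1$ for every $v\in V$, the maximum weight of a vertex is $M=1$, and moreover $opt_w=opt$. So I would simply invoke \alg{ALG} on the instance viewed as a \textsc{WMMVC} instance with unit weights, and carry the statement of Theorem~\ref{theorem:approx} through the substitution $M=1$.

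First, I would simplify the upper bound on $\alpha$: with $M=1$,
\[
\frac{1}{2-\frac{1}{M+1}} \;=\; \frac{1}{2-\frac{1}{2}} \;=\; \frac{2}{3}.
\]
Next, I would simplify the expression for $x$ given in the theorem. With $M=1$, the denominator $M(2\alpha-1)+1-\alpha$ becomes $(2\alpha-1)+(1-\alpha)=\alpha$, so
\[
x \;=\; 1-\frac{1-\alpha}{\alpha} \;=\; \frac{2\alpha-1}{\alpha} \;=\; 2-\frac{1}{\alpha},
\]
matching the formula in the corollary. The condition $\frac{1}{x^x(1-x)^{1-x}}\geq 3^{1/3}$ and the running time $O^*((\frac{1}{x^x(1-x)^{1-x}})^{vc})$ are preserved verbatim, and the polynomial space complexity is inherited from \alg{ALG}. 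Finally, since $opt_w=opt$, the guarantee that \alg{ALG} returns a minimal vertex cover of weight at least $\alpha\cdot opt_w$ becomes a minimal vertex cover of size at least $\alpha\cdot opt$, as claimed.

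For the footnote's explicit numerical range, I would substitute $M=1$ into the lower bound $\frac{1}{2-\frac{1}{7.35841\cdot M+1}}$ from the footnote of Theorem~\ref{theorem:approx}, which evaluates to approximately $0.53183$, yielding the stated interval $0.53183\leq \alpha<\frac{2}{3}$. There is essentially no obstacle here beyond verifying the algebraic simplifications above; the entire content of the corollary is already delivered by Theorem~\ref{theorem:approx}, and no further analysis of \alg{ALG} is needed.
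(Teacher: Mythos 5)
Your proposal is correct and matches the paper exactly: the paper derives Corollary~\ref{cor:unweightedApprox} as an immediate specialization of Theorem~\ref{theorem:approx} with $M=1$ (and $opt_w=opt$), and your algebraic simplifications of the bound $\frac{1}{2-\frac{1}{M+1}}=\frac{2}{3}$ and of $x$ to $2-\frac{1}{\alpha}$ are exactly the verifications needed.
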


\section{Upper and Lower Bounds}\label{sec:simple}

In this section, we give upper and conditional lower bounds related to the parameterized complexity of {\sc WMMVC} and {\sc MMVC} with respect to $vc$ and $vc_w$. Recall, in this context of the next results, that $vc\leq vc_w$ and {\sc MMVC} is a special case of {\sc WMMVC}.

\begin{obs}
{\sc WMMVC} is solvable is time $O^*(2^{vc})$ and polynomial space.
\end{obs}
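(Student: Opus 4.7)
The plan is to exploit the well-known one-to-one correspondence between minimal vertex covers and maximal independent sets, and to parameterize the enumeration by intersection with a fixed minimum vertex cover.

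First I would recall the structural fact that $S\subseteq V$ is a minimal vertex cover of $G$ if and only if $I_S := V\setminus S$ is a maximal independent set of $G$. This is standard: if $S$ is a vertex cover then $I_S$ is independent, and minimality of $S$ (each $v\in S$ has a private neighbor in $I_S$) is equivalent to maximality of $I_S$ (every $v\in S$ has a neighbor in $I_S$).

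Next, I would compute a minimum vertex cover $C$ of $G$, of size $vc$, using any known $O^*(\gamma^{vc})$ polynomial-space algorithm for \textsc{VC} (e.g.\ \cite{vc2010}); this fits comfortably inside an $O^*(2^{vc})$ budget. Note $V\setminus C$ is an independent set. Now for an arbitrary minimal vertex cover $S$ with associated maximal independent set $I_S$, I would set $A := I_S\cap C$ and $B := I_S\cap(V\setminus C)$. I would then show that once $A$ is fixed, $B$ is forced: clearly every $v\in B$ must be non-adjacent to $A$, and conversely, I would argue that every $v\in V\setminus C$ whose neighborhood avoids $A$ must lie in $I_S$. For the latter, if such a $v$ were in $S$, then minimality of $S$ would force $v$ to have a private neighbor $u\in I_S$; this $u$ cannot lie in $A$ by choice of $v$, so $u\in V\setminus C$, contradicting the independence of $V\setminus C$. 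Hence
\[
B \;=\; \{\, v\in V\setminus C : N(v)\cap A = \emptyset \,\},
\]
and $S$ is completely determined by $A\subseteq C$.

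The algorithm then enumerates all $2^{vc}$ subsets $A\subseteq C$; for each, it forms $I_A := A\cup\{v\in V\setminus C : N(v)\cap A=\emptyset\}$, tests in polynomial time whether $I_A$ is a maximal independent set (equivalently, whether $V\setminus I_A$ is a minimal vertex cover), and, if so, records the weight $w(V\setminus I_A)$. The best such weight is returned. By the reconstruction argument above, every minimal vertex cover of $G$ is examined, so the output is optimal. The total time is $O^*(2^{vc})$ and the space is polynomial. The only mildly subtle step is the forcing of $B$ from $A$, which is where the independence of $V\setminus C$ (guaranteed by $C$ being a vertex cover) plays its role; everything else is bookkeeping.
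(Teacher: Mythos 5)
Your proof is correct and is essentially the paper's own argument in complementary form: the paper enumerates the trace $S'=A^*\cap S$ of the target minimal vertex cover on a minimum vertex cover $S$ and forces the rest via $S'\cup\bigcup_{v\in S\setminus S'}N(v)$, while you enumerate the trace of the complementary maximal independent set on $C$ and force the rest via non-neighbors; these are the same $2^{vc}$-enumeration up to complementation within $C$. The key idea --- that the intersection with a minimum vertex cover determines the entire minimal vertex cover, with independence of $V\setminus C$ doing the forcing --- is identical.
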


\begin{proof}
The algorithm is as follows. First, compute a minimum vertex cover $S$ of $G$ in time $O^*(\gamma^{vc})$ using polynomial space, and initialize $A$ to $S$. Then, for every subset $S'\subseteq S$, if $B=S'\cup (\bigcup_{v\in S\setminus S'}N(v))$ is a minimal vertex cover of weight larger than the weight of $A$, update $A$ to store $B$. Finally, return $A$.

Clearly, we return a minimal vertex cover. Now, let $A^*$ be an optimal solution. Consider the iteration where we examine $S'=A\cap S$. Then, since $A^*$ is a vertex cover, $B\subseteq A^*$. Suppose, by way of contradiction, that there exists $v\in A^*\setminus B$. The vertex $v$ does not belong to $S$ (since $A\cap S\subseteq B$). Moreover, it should have a neighbor outside $A^*$ (since $A^*$ is a {\em minimal} vertex cover). Thus, since $S$ is a vertex cover, $v$ has a neighbor in $S\setminus A^*$. This implies that $v\in \bigcup_{u\in S\setminus S'}N(u)$, which contradicts the assumption that $v\notin B$. We conclude that $B=A^*$. Therefore, the algorithm is correct, and it clearly has the desired time and space complexities.\qed
\end{proof}

Now, we observe that even in a restricted setting, the algorithm above is essentially optimal under the SETH.

\begin{lemma}\label{lemma:lower1}
For any constant $\epsilon>0$, {\sc WMMVC} in bipartite graphs, where $w(v)\in\{1,1+\frac{1}{|V|}\}$ for all $v\in V$, cannot be solved in time $O^*((2-\epsilon)^{vc_w})$ unless the SETH fails.
\end{lemma}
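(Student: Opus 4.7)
My plan is to prove the lemma by a reduction from CNF-SAT, which under SETH admits no $O^{*}((2-\epsilon)^n)$ algorithm for any $\epsilon>0$. Given a CNF formula $\phi$ on $n$ variables $x_1,\dots,x_n$ and $m$ clauses, I would construct in polynomial time a bipartite graph $G=(L\cup R,E)$, a weight function $w:V\to\{1,1+1/|V|\}$, and a threshold $W^{*}$ such that (i) $vc_w(G)\le n+o(n)$, and (ii) $\phi$ is satisfiable if and only if $G$ admits a minimal vertex cover of weight at least $W^{*}$. Given such a reduction, an $O^{*}((2-\epsilon)^{vc_w})$ algorithm for WMMVC would yield an $O^{*}((2-\epsilon)^{n+o(n)}) = O^{*}((2-\epsilon')^n)$ algorithm for CNF-SAT, contradicting SETH.

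The variable gadgets are single edges: for each $x_i$, introduce $(a_i,b_i)$ with $a_i\in L$, $b_i\in R$ and $w(a_i)=w(b_i)=1$; placing $a_i$ in the cover encodes $x_i=\textrm{true}$ and placing $b_i$ encodes $x_i=\textrm{false}$, so the variable gadgets together contribute exactly $n$ to $vc_w$. For each clause $C_j$ I would introduce two clause vertices $c_j^L\in L$ and $c_j^R\in R$ of weight $1+1/|V|$, with an edge $(c_j^R,a_i)$ for every positive literal $x_i\in C_j$ and an edge $(c_j^L,b_i)$ for every negative literal $\bar{x}_i\in C_j$. The crucial invariant is that in a minimal vertex cover $S$ of $G$ induced by a truth assignment, $c_j^R$ belongs to $S$ precisely when some $a_i$ (for a positive literal of $C_j$) lies outside $S$, i.e., when some positive literal evaluates to false, and symmetrically for $c_j^L$. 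Hence the pair $(c_j^L,c_j^R)$ encodes the satisfaction status of $C_j$: both are forced into $S$ exactly when $C_j$ is unsatisfied. Setting $W^{*}$ to use the $1/|V|$ weight granularity for discrimination yields the equivalence in (ii).

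The technical heart of the argument is enforcing $vc_w(G)\le n+o(n)$: a direct analysis only gives $vc_w=n+\Theta(m)$, since clause vertices may be forced into every vertex cover. I would overcome this by (a) invoking the Sparsification Lemma of Impagliazzo--Paturi--Zane to work with $q$-SAT for $q$ growing slowly with $1/\epsilon$, so that one may assume $m=O_q(n)$, and (b) augmenting the construction with auxiliary pendants attached to the clause vertices, together with the $1/|V|$ weight bias, so that clause vertices are excludable from the minimum weight vertex cover while remaining admissible in the maximum weight minimal vertex cover. The hardest step of the plan is the bipartite balancing act: clauses mix positive and negative literals whose truth-witness vertices lie on opposite sides of $L\cup R$, which forces the two-vertex-per-clause design and requires verifying that neither the "double weight bonus" from clauses with true literals on both sides nor asymmetric unsatisfied configurations can spoof the threshold $W^{*}$. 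Once these gadget details are verified, correctness and the SETH implication follow routinely.
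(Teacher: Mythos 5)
Your high-level shape matches the paper's (a SETH-based reduction into a bipartite graph with weights $1$ and $1+\frac{1}{|V|}$ and a weight threshold read off at granularity $\frac{1}{|V|}$), but the two steps you flag as ``to be verified'' are exactly where the argument breaks, and your proposed fixes do not close the gaps. First, the bound $vc_w\leq n+o(n)$: sparsification only yields $m=O_\epsilon(n)$, so your graph still has $vc_w\geq cn$ for a constant $c>1$ (a variable occurring both positively and negatively forces, absent clause vertices in the cover, both $a_i$ and $b_i$ into it; putting clause vertices in instead costs $\Theta(m)$), and the sparsification constant grows as $\epsilon\to 0$, which is the wrong direction for the final calculation $(2-\epsilon)^{cn}\leq(2-\epsilon')^n$. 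Pendants attached to clause vertices make matters worse, not better: a pendant edge is one more edge to cover and pushes its clause vertex \emph{into} every small vertex cover rather than out of it. Second, the clause-gadget invariant is false: by minimality, $c_j^R\in S$ iff \emph{some} positive literal of $C_j$ is false (some neighbour $a_i\notin S$), not iff all literals are false, so the number of clause vertices in $S$ tracks which literals are false in each clause rather than whether the clause is satisfied; a satisfied clause with one false positive and one false negative literal contributes both $c_j^L$ and $c_j^R$ and ``spoofs'' $W^*$. Worse, since you put the weight bonus on the clause vertices, forcing more of them into $S$ \emph{increases} the objective, so the optimization rewards falsified literals --- the direction is backwards.

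The paper sidesteps both problems by reducing from {\sc Hitting Set} parameterized by the universe size $n$, whose SETH-hardness is a known result of Cygan et al.\ (the cited \cite{cnfSAT}); that reduction from CNF-SAT is itself nontrivial (variable grouping), and it is precisely what absorbs the positive/negative-literal mixing you identify as the hardest step. Because {\sc HS} is monotone, all ``clause'' edges run from the set-vertices to the element-side $L$, so $L$ alone is a vertex cover of size $n$ and $vc_w\leq n(1+\frac{1}{|V|})$ is immediate. The weight bonus is placed on the element vertices $l_u$ (so maximizing weight means choosing few $r_u$, i.e., a small hitting set), each set-vertex is duplicated $n+1$ times to force all of $R_2$ into any optimal minimal vertex cover, and minimality at those vertices is exactly the hitting condition. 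If you want to keep a direct CNF-SAT reduction, you would have to import the grouping machinery of \cite{cnfSAT} yourself; as written, the proposal has a genuine gap.
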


\begin{proof}
Fix $\epsilon>0$. Suppose, by way of contradiction, that there exists an algorithm, \alg{A}, that solves {\sc WMMVC} in the restricted setting in time $O^*((2-\epsilon)^{vc_w})$. We aim to show that this implies that there exists an algorithm that solves the {\sc Hitting Set (HS)} problem in time $O^*((2-\epsilon)^n)$, which contradicts the SETH \cite{cnfSAT}. In {\sc HS}, we are given an $n$-element set $U$, along with family of subsets of $U$, ${\cal F}=\{F_1,F_2,\ldots,F_m\}$, and the goal is to find the minimum size of a subset $U'\subseteq U$ that is a hitting set (i.e., $U'$ contains at least one element from every set in $\cal F$). 

We next construct an instance $(G=(V,E), w: V\rightarrow\mathbb{R}^{\geq 1})$ of {\sc WMMVC} in the restricted setting:
\begin{itemize}
\item $R_1=\{r_u: u\in U\}$, and $R_2=\{r^c_i: c\in\{1,\ldots,n+1\}, i\in\{1,\ldots,m\}\}$.
\item $L=\{l_u: u\in U\}$, and $R=R_1\cup R_2$.
\item $V=L\cup R$.
\item $E=\{\{l_u,r_u\}: u\in U\}\cup\{\{l_u,r^c_i\}: u\in F_i, i\in\{1,\ldots,m\}, c\in\{1,\ldots,n+1\}\}$.
\item $\forall v\in L: w(v)=1+\frac{1}{|V|}$.
\item $\forall v\in R: w(v)=1$.
\end{itemize} 

\begin{figure}[!t]\centering
\frame{\includegraphics[scale=0.75]{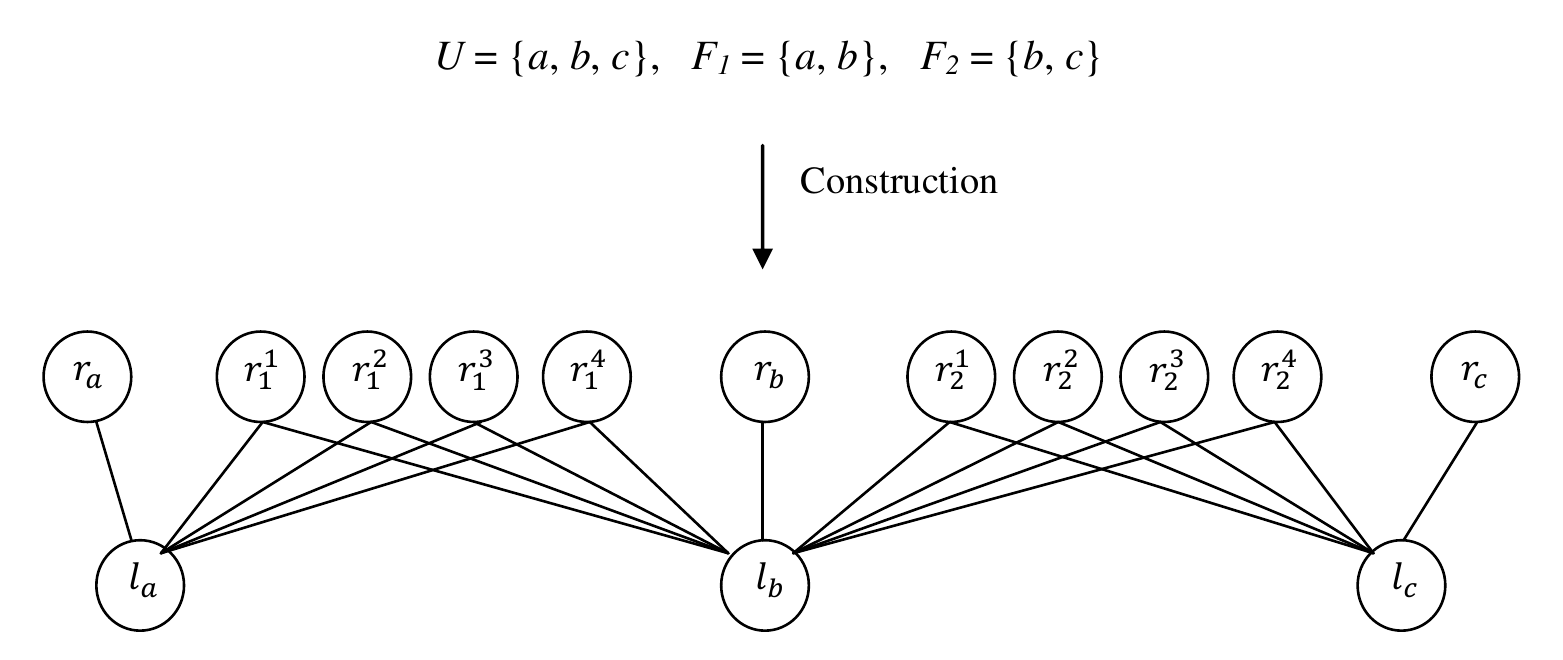}}
\caption{The construction in the proof of Lemma \ref{lemma:lower1}.}
\label{fig:reduction}
\end{figure}

An illustrated example is given in Fig.~\ref{fig:reduction}. It is enough to show that (1) $vc\leq n$, and (2) the solution for $(U,{\cal F})$ is $q$ {\em iff} the solution for $(G,w)$ is $\frac{n-q}{|V|} + |R|$. 
Indeed, this implies that we can solve {\sc HS} by constructing the above instance in polynomial time, running \alg{A} in time $O^*((2-\epsilon)^{vc})$ (since $vc\leq n$), obtaining an answer of the form $\frac{n-q}{|V|} + |R|$, and returning $q$.

First, observe that $L$ is a minimal vertex cover of $G$: it is a vertex cover, since every edge has exactly one endpoint in $L$ and one endpoint in $R$, and it is minimal, since every vertex in $L$ has an edge that connects is to at least one vertex in $R$. Therefore, $vc\leq |L|=n$.

Now, we turn to prove the second item. For the first direction, let $q$ be the solution to $(U,{\cal F})$, and let $U'$ be a corresponding hitting set of size $q$. Consider the vertex set $S=\{l_u: u\in U\setminus U'\}\cup \{r_u: u\in U'\}\cup R_2$. By the definition of $w$, the weight of $S$ is $(1+\frac{1}{|V|})|U\setminus U'| + |U'| + |R_2| = \frac{1}{|V|}|U\setminus U'| + |R_1| + |R_2| = \frac{n-q}{|V|} + |R|$. The set $S$ is a vertex cover: since $R_2\subseteq S$, every edge in $G$ that does not have an endpoint in $R_2$ is of the form $\{l_u,r_u\}$, and for every $u\in U$ either $l_u\in S$ (if $u\notin U'$) or $r_u\in S$ (if $u\in U'$). Moreover, $S$ is a {\em minimal} vertex cover. Indeed, we cannot remove any vertex $l_u\in S\cap L$ or $r_u\in S\cap R_1$ and still have a vertex cover, since then the edge of the form $\{l_u,r_u\}$ is not covered. Also, we cannot remove any vertex $r^c_i\in R_2$, since there is a vertex $l_u\notin S$ such that $\{l_u,r^c_i\}\in E$ (to see this, observe that because $U'$ is a hitting set, there is a vertex $u\in U'\cap F_i$, which corresponds to the required vertex $l_u$).

For the second direction, let $p$ be the solution to $(G,w)$, and let $S$ be a corresponding minimal vertex cover of weight $p$. Clearly, $p\geq w(R) = n + m(n+1)$, since $R$ is a minimal vertex cover of $G$. Observe that for all $u\in U$, by the definition of $G$ and since $S$ is a minimal vertex cover, exactly one among the vertices $l_u$ and $r_u$ is in $S$. Suppose that there exists $r^c_i\in R_2\setminus S$. Then, for all $u\in F_i$, we have that $l_u\in S$ (by the definition of $G$ and since $S$ is a vertex cover), which implies that for all $c\in\{1,\ldots,n+1\}$, we have that $r^c_i\notin S$ (since $S$ is a {\em minimal} vertex cover). Thus, $p = w(S\cap (L\cup R_1)) + |S\cap R_2|\leq n(1+\frac{1}{|V|}) + (m-1)(n+1) < n + m(n+1)$, which is a contradiction. Therefore, $R_2\subseteq S$.

Denote $U'=\{u: r_u\in S\cap R_1\}$. By the above discussion and the definition of $w$, $|S| = \frac{|V\setminus R_2|}{2} + |R_2| = |R_1|+|R_2|=|R|$, and $p-|S|=\frac{1}{|V|}|S\cap L|=\frac{1}{|V|}(n-|S\cap R_1|)$. Denoting $|U'|=q$, we have that $p=\frac{n-q}{|V|} + |R|$. Thus, it remains to show that $U'$ is a hitting set. Suppose, by way of contradiction, that $U'$ is not a hitting set. Thus, there exists $F_i\in{\cal F}$ such that for all $u\in F_i$, we have that $u\notin U'$. By the definition of $U'$, this implies that for all $u\in F_i$, we have that $l_u\in S$. Thus, $N(r^1_i)\subseteq S$, while $r^1_i\in S$ (since we have shown that $R_2\subseteq S$), which is a contradiction to the fact that $S$ is a {\em minimal} vertex cover.\qed
\end{proof}

Next, we also give a conditional lower bound for {\sc MMVC}. The proof is quite similar to the one above, and is thus relegated to Appendix \ref{app:simple}. Informally, the idea is to modify the above proof by adding a copy $l_u'$ of every vertex $l_u\in L$, which is only attached to its ``mirror vertex'', $r_u$, in $R_1$. While previously the weights $1+\frac{1}{|V|}$ encouraged us to choose many vertices from $L$, now the copies encourage us to choose these vertices.

\begin{lemma}\label{lemma:lower2}
For any constant $\epsilon>0$, {\sc MMVC} in bipartite graphs cannot be solved in time $O^*((2-\epsilon)^{\frac{vc}{2}})$ unless the SETH fails.
\end{lemma}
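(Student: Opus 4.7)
The plan is to mirror Lemma \ref{lemma:lower1} by reducing from {\sc Hitting Set (HS)}, which cannot be solved in time $O^*((2-\epsilon)^n)$ unless the SETH fails. I aim for a bipartite graph $G$ with $vc\le 2n$, so that an algorithm for {\sc MMVC} running in time $O^*((2-\epsilon)^{vc/2})$ would yield one for {\sc HS} running in time $O^*((2-\epsilon)^n)$. Since {\sc MMVC} is unweighted, the role previously played by the weights $1+\frac{1}{|V|}$ on $L$---gently preferring vertex covers that include $L$---will now be played by pendant copies of the $l_u$'s.

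Given $(U,\mathcal{F}=\{F_1,\ldots,F_m\})$, I would construct $G=(V,E)$ whose vertex set consists of $L=\{l_u:u\in U\}$, a fresh copy $L'=\{l_u':u\in U\}$, $R_1=\{r_u:u\in U\}$, and $R_2=\{r_i^c:c\in\{1,\ldots,n+1\},\ i\in\{1,\ldots,m\}\}$, and whose edges are $\{l_u,r_u\}$ and $\{l_u',r_u\}$ for every $u\in U$, together with $\{l_u,r_i^c\}$ for every $u\in F_i$ and every $c$. The bipartition $(L\cup L',\ R_1\cup R_2)$ shows $G$ is bipartite, while $L\cup R_1$ is a vertex cover of size $2n$, so $vc\le 2n$.

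The heart of the argument is to establish that $opt=2n+m(n+1)-q^\star$, where $q^\star$ is the minimum hitting-set size, so that $q^\star$ is recoverable from $opt$ by subtraction. For any minimal vertex cover $S$, I would classify each $u\in U$ by the trace $\{l_u,l_u',r_u\}\cap S$. A case analysis using edge-coverage and minimality (crucially, $l_u'$ has $r_u$ as its unique neighbor) leaves only three options: (A) $\{l_u,l_u'\}\subseteq S$ with $r_u\notin S$; (B) $\{l_u,r_u\}\subseteq S$ with $l_u'\notin S$; and (D) only $r_u\in S$. Setting $D=\{u:l_u\notin S\}$, the contribution of $L\cup L'\cup R_1$ to $|S|$ is exactly $2n-|D|$. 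A separate minimality argument forces $r_i^c\in S$ if and only if $F_i\cap D\neq\emptyset$: if some $l_u$ with $u\in F_i$ lies outside $S$, the edge $\{l_u,r_i^c\}$ forces $r_i^c\in S$, while if every such $l_u$ lies in $S$, $r_i^c$ lacks a private neighbor. Hence $|S|=2n-|D|+(n+1)\cdot|\{i:F_i\cap D\neq\emptyset\}|$.

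To maximize this, I would observe that whenever $D$ fails to hit some $F_i$, adding any $u\in F_i$ to $D$ costs $1$ on the first term but gains $n+1$ on the second, a net gain of $n$; hence the optimum is attained at some hitting set, and among these at a smallest one, giving $opt=2n-q^\star+m(n+1)$. The step I expect to be the main obstacle is handling the interaction between options A and B when $D$ is a hitting set: then no $F_i$ satisfies $F_i\cap D=\emptyset$, so option B becomes unavailable, and one must verify that option A alone realizes the claimed value for every $u\notin D$, with $r_u\notin S$ simultaneously serving as the private neighbor of both $l_u$ and $l_u'$. Running the hypothetical {\sc MMVC} algorithm on $G$ and reading off $q^\star=2n+m(n+1)-opt$ then completes the reduction and contradicts the SETH.
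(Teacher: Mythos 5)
Your construction is identical to the paper's (your $l_u,l_u'$ are its $l^1_u,l^2_u$), the reduction source ({\sc Hitting Set} under the SETH) is the same, and your backward-direction analysis---deriving the exact formula $|S|=2n-|D|+(n+1)\cdot|\{i:F_i\cap D\neq\emptyset\}|$ for every minimal vertex cover and maximizing it by an exchange argument---is a correct, slightly cleaner repackaging of the paper's normalization-plus-counting argument. The obstacle you flag is not one: minimality only requires each of $l_u,l_u'\in S$ to have \emph{some} neighbor outside $S$, and $r_u\notin S$ serves for both, so option A realizes the claimed value exactly as in the paper's forward direction.
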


\section{A Parameterized Approximation Algorithm}\label{sec:approx}

In this section we develop \alg{ALG} (see Theorem \ref{theorem:approx}). When referring to $\alpha$, suppose that it is chosen as required in Theorem \ref{theorem:approx}, and define $x$ accordingly. The algorithm is a mix of two bounded search tree-based procedures, \alg{ProcedureA} and \alg{ProcedureB}, which are developed in the following subsections.\footnote{\alg{ProcedureA} could also be developed without using recursion; however, relying on the bounded search tree technique simplifies the presentation, emphasizing the parts similar between \alg{ProcedureA} and \alg{ProcedureB}.} For these procedures, we will prove the following lemmas.

\begin{lemma}\label{lemma:proA}
Let $U$ be a minimum(-size) vertex cover of $G$, and let $U'$ be a minimum(-size) vertex cover of $G[U]$. Moreover, suppose that $|U'|\leq \frac{vc}{2}$. 

Then, \alg{ProcedureA}$(G,w,\alpha,U,U',\emptyset,\emptyset)$ runs in time $O^*((\frac{1}{x^x(1-x)^{1-x}})^{vc})$, using polynomial space and returning a minimal vertex cover of weight at least $\alpha\cdot opt_w$.
\end{lemma}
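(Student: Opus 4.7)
The plan is to present \alg{ProcedureA} as a recursive bounded search tree algorithm whose recursion parameter is $U'$, exploiting the hypothesis $|U'|\leq \frac{vc}{2}$. The main structural observation I would rely on is that, since $U'$ is a vertex cover of $G[U]$ and $U$ is itself a vertex cover of $G$, the set $U\setminus U'$ is independent in $G[U]$, while $V\setminus U$ is independent in $G$. Consequently, once the fate of every vertex in $U'$ (in the partial solution $A$ or out of it) has been decided, the remaining graph has a very restricted two-layered structure that can be handled by a dedicated leaf subroutine. The two empty-set arguments passed to \alg{ProcedureA} would serve as the running "in $A$" / "out of $A$" sets that accumulate forced decisions along the recursion.

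First I would describe the branching. At an internal node the procedure selects an undecided vertex $u\in U'$ and creates two branches according to whether $u\in A$. If $u\notin A$, the vertex-cover requirement forces $N(u)\subseteq A$ and propagates further into $U'$. If $u\in A$, the minimality requirement obliges $u$ to own a private neighbor outside $A$; the algorithm realizes this by sub-branching on which neighbor plays that role, leading to branching vectors whose sizes depend on how many of $u$'s neighbors lie in $U'$, in $U\setminus U'$, and in $V\setminus U$. A measure-and-conquer style analysis, with $|U'|$ as measure, would then bound the number of leaves by $\binom{|U'|}{x|U'|}$ up to polynomial factors; via the entropy identity $\binom{n}{xn}\approx (x^x(1-x)^{1-x})^{-n}$ and using $|U'|\leq \frac{vc}{2}$ together with the doubling accounted for by sub-branching on witnesses, this translates into the claimed $O^*((\frac{1}{x^x(1-x)^{1-x}})^{vc})$ running time.

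Next I would describe the leaf subroutine and the approximation analysis. At a leaf the assignment on $U'$ is fully fixed, so each vertex of $U\setminus U'$ has all its $U$-neighbors already decided (its only $U$-neighbors lie in $U'$), and $V\setminus U$ is independent. A direct, polynomial-time completion then selects, for each remaining vertex, whether to include it in $A$ so as to maximize weight subject to preserving both the vertex-cover property and the minimality property (each included vertex must retain some witness outside $A$). Correctness proceeds by fixing an optimal minimal vertex cover $A^{\ast}$ and tracing the branch of the search tree whose choices on $U'$ agree with $A^{\ast}\cap U'$. In that branch I would show that the leaf completion returns a cover whose weight is at least $\alpha\cdot w(A^{\ast})=\alpha\cdot opt_w$; the only loss comes from places where the completion is forced to swap a heavy vertex (of weight up to $M$) used by $A^{\ast}$ for a lighter witness (of weight at least $1$), and a charging argument would cap the total loss at a $(1-\alpha)$ fraction of $opt_w$, which explains the threshold $\alpha<\frac{1}{2-1/(M+1)}$.

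The main obstacle will be the joint calibration of the branching and of the approximation guarantee. On the one hand, the branching rules must be sharp enough for the entropy bound $\frac{1}{x^x(1-x)^{1-x}}$ to come out correctly for arbitrary admissible $x$, which rules out the crude $2^{|U'|}$ enumeration and forces a careful case split on witness vertices. On the other hand, the leaf completion must be tight enough for the worst-case weight swap to cost at most a $(1-\alpha)$ fraction of $opt_w$, and the two calibrations must agree so that the admissibility condition $\frac{1}{x^x(1-x)^{1-x}}\geq 3^{1/3}$ postulated in Theorem~\ref{theorem:approx} is exactly what is needed for the entropy-based leaf count to absorb the combinatorial loss incurred at each leaf.
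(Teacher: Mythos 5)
Your proposal shares the outer shell of the paper's argument (branch over the vertices of $U'$, exploit $|U'|\leq\frac{vc}{2}$, finish with a special leaf routine, and get the entropy term $\frac{1}{x^x(1-x)^{1-x}}$ from a binomial coefficient), but the load-bearing step is missing. You assert that at each leaf a \emph{polynomial-time} completion, combined with a charging argument for swapped witnesses, yields weight at least $\alpha\cdot opt_w$. This is precisely the hard part of the problem and it is not true as stated: after the fate of $U'$ is fixed, the residual instance on $(U\setminus U')\cup(V\setminus U)$ is still essentially an unrestricted bipartite {\sc WMMVC} instance (the choices for distinct vertices of $U\setminus U'$ interact through shared neighbours in $V\setminus U$), and the paper's own Lemmas \ref{lemma:lower1} and \ref{lemma:lower2} show such instances are hard even to solve near-optimally fast. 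The paper's actual escape is a dichotomy you do not have: either the optimum $S^*$ satisfies $|S^*\cap U|\geq(1-x)vc$, in which case the leaf performs an \emph{exponential-time} exhaustive search over all $\widetilde{U}\subseteq U$ with $|\widetilde{U}|\geq(1-x)vc$ consistent with $(I,O)$ and recovers $S^*$ exactly (since $S^*=F^*\cup\bigcup_{v\in U\setminus F^*}N(v)$ for $F^*=S^*\cap U$); or $|S^*\cap U|<(1-x)vc$, in which case the fallback answers $U$ and $\widetilde{A}$ already achieve ratio $\frac{1}{2-\frac{x}{M-(M-1)x}}=\alpha$, because $S^*\setminus U\subseteq\widetilde{A}\setminus U$ and $w(S^*\cap U)\geq$ a bounded fraction of $w(U)$. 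The entropy term thus comes from $\max_{i\geq(1-x)vc}\binom{vc}{i}$ counted over subsets of $U$ at the leaves, not from the branching on $U'$; the branching itself is the plain $(1,1)$ rule contributing only $2^{|U'|}\leq 2^{vc/2}$, which is absorbed because $\frac{1}{x^x(1-x)^{1-x}}\geq 3^{1/3}>2^{1/2}$.

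Two further points would also need repair. First, your sub-branching ``on which neighbour plays the role of the private witness'' creates up to $\Delta$ children per node with no decrease in your measure $|U'|$ for most of them, so the claimed leaf count $\binom{|U'|}{x|U'|}$ does not follow; the paper deliberately avoids branching on witnesses altogether. Second, the condition $\frac{1}{x^x(1-x)^{1-x}}\geq 3^{1/3}$ is not there to ``absorb a combinatorial loss at each leaf''; it ensures that the leaf-search term dominates both the $2^{vc/2}$ branching term of \alg{ProcedureA} and the $3^{vc/3}$ cost of \alg{ProcedureB}, and the threshold $\alpha<\frac{1}{2-\frac{1}{M+1}}$ arises from the ratio $\frac{w(U)}{w(U)+w(S^*\cap U)}$ in the second case of the dichotomy, not from a heavy-for-light vertex swap.
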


\begin{lemma}\label{lemma:proB}
Let $U$ be a minimum(-size) vertex cover of $G$ such that the size of a minimum(-size) vertex cover of $G[U]$ is larger than $\frac{vc}{2}$. 

Then, \alg{ProcedureB}$(G,w,U,\emptyset,\emptyset)$ runs in time $O^*(3^{\frac{vc}{3}})$, using polynomial space and returning a minimal vertex cover of weight at least $\displaystyle{\frac{1}{2-\frac{1}{M+1}}}\cdot opt_w$.
\end{lemma}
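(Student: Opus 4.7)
The plan is to implement \alg{ProcedureB} as a bounded search tree algorithm that branches on the undecided vertices of $U$, carrying partial commitments $S_{\mathrm{in}}, S_{\mathrm{out}} \subseteq U$ recording which vertices of $U$ are forced into or out of the target minimal vertex cover $S$ (initially both empty, as in the lemma's call). Vertices of $V \setminus U$ are not branched on but handled implicitly: they are added to $S$ only when forced by some $v \in S_{\mathrm{out}}$. Between branching steps, reduction rules adapted from Peiselt's iterative compression algorithm for \textsc{VC} are applied: forcing $N(v) \cap U$ into $S_{\mathrm{in}}$ whenever $v \in S_{\mathrm{out}}$, resolving forced or dominated vertices, and dispatching low-degree cases via standard tricks.

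The central branching rule exploits the condition $vc(G[U]) > vc/2$, equivalently (by Gallai's identity) $\alpha(G[U]) < vc/2$. This density ensures that, after reductions, the undecided portion of $G[U]$ contains a triangle $\{u_1, u_2, u_3\}$ (or a comparable local structure) on which I would branch three ways: for each $i \in \{1,2,3\}$, branch $i$ sets $u_i \in S_{\mathrm{out}}$ and $u_j \in S_{\mathrm{in}}$ for $j \neq i$. Each branch commits three vertices of $U$, giving branching vector $(3,3,3)$ and a search tree of size $3^{vc/3}$; polynomial work per node and polynomial space follow from storing only $(S_{\mathrm{in}}, S_{\mathrm{out}})$ along each root-to-leaf path, with a polynomial-time completion at each leaf restoring minimality (e.g., by greedily removing redundant vertices that no longer have private out-neighbors) and recording the resulting weight.

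For the approximation ratio $\frac{M+1}{2M+1}$, observe that the $3$-way branching above omits, for each triangle, the fourth VC-feasible configuration ``all three $u_i$'s in $S$.'' The key fact I would leverage is that $U$ is a minimum vertex cover, so each $u \in U$ has at least one neighbor in $V \setminus U$ of weight $\geq 1$ (since $V \setminus U$ is a maximum independent set). Hence, whenever a branch places $u_i \in S_{\mathrm{out}}$, at least one $V \setminus U$-neighbor of weight $\geq 1$ is forced into $S$, compensating partially for the weight $w(u_i) \leq M$ lost relative to the omitted ``all in'' option. A careful amortized argument---tracking which forced $V \setminus U$-vertices can be attributed to each omitted ``all in'' option across the search tree and comparing to the weight of an optimal minimal VC $S^*$ decomposed into $S^* \cap U$ and $S^* \cap (V \setminus U) = N(U \setminus S^*) \cap (V \setminus U)$---yields the ratio $\frac{M+1}{2M+1} = \frac{1}{2 - 1/(M+1)}$.

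The main obstacle I anticipate is tightening the above approximation argument globally. A naive per-triangle comparison only yields a ratio of the form $(4-M)/3$, which is loose for $M > 1$; the claimed bound $\frac{M+1}{2M+1}$ requires arguing that the total weight gained from forced $V \setminus U$-neighbors plus retained $u$-vertices compares favorably against the full optimal weight $w(S^*)$, not merely its triangle-local contribution. In particular, one must ensure that forced $V \setminus U$-neighbors attributed to different triangles are distinct (so that the $+1$ gains are not double-counted), possibly via an explicit injective assignment from ``out'' vertices to their private $V \setminus U$-neighbors. A secondary obstacle is the base case where the undecided portion of $G[U]$ fails to contain an exploitable triangle after reductions; here, one would either reduce to a polynomially solvable instance or invoke an alternative $3$-way branching rule (still dominated by the $(3,3,3)$ envelope) drawn from the iterative compression-based \textsc{VC} toolkit.
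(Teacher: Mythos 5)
Your search-tree skeleton (partial commitments $S_{\mathrm{in}},S_{\mathrm{out}}$ on $U$, Peiselt-style reductions, a measure driven by decided vertices of $U$) matches the paper's \alg{ProcedureB}, but the two load-bearing steps differ and both have genuine gaps. First, the structural claim that $vc(G[U])>\frac{vc}{2}$ forces a triangle in the undecided part of $G[U]$ is false: having independence number below $\frac{vc}{2}$ does not preclude triangle-free graphs (already $C_5$ has minimum vertex cover $3>\frac{5}{2}$ and no triangle, and after reductions the undecided portion of $G[U]$ can be exactly a long cycle). The paper does use a $(3,3,3)$ triangle branch, but only as one case among several (neighbors of leaves, leaves, degree-$\geq 3$ vertices, cycles), all of which are \emph{exhaustive} branchings with root at most $3^{1/3}$; in particular its triangle rule keeps the ``all three in'' configuration alive by not placing the third vertex in $O$. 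Second, and more importantly, you place the approximation loss inside the branching (by discarding the ``all in'' option) and then need a global amortized charging of forced $V\setminus U$-neighbors, which you yourself note you cannot close --- the per-triangle bound $(4-M)/3$ is vacuous for $M\geq 4$, and the injective assignment of private outside neighbors is only conjectured.

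The paper's argument avoids all of this: the branching is lossless, and the entire ratio is established at the leaves. There, the procedure builds one concrete minimal vertex cover $\widetilde{A}$ consistent with $(I,O)$ and returns the heavier of $\widetilde{A}$ and $U$ (which is itself minimal, being a minimum vertex cover). The hypothesis $vc(G[U])>\frac{vc}{2}$ is used exactly here: $\widetilde{A}\cap U$ is a \emph{minimal} vertex cover of $G[U]$, hence has more than $\frac{vc}{2}$ vertices, so $w(\widetilde{A}\cap U)\geq\frac{1}{M+1}\,w(U)$; combined with $S^*\setminus U\subseteq\widetilde{A}\setminus U$ for an optimal solution $S^*$ compatible with $(I,O)$, a short chain of inequalities yields $\max\{w(U),w(\widetilde{A})\}\geq\frac{1}{2-\frac{1}{M+1}}\cdot w(S^*)$. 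In your write-up the hypothesis on $vc(G[U])$ is spent only on the (incorrect) triangle claim, so even if the amortization were repaired you would not have invoked the assumption where it is actually needed.
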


Having these procedures, we give the pseudocode of \alg{ALG} below. The algorithm computes a minimum vertex cover $U'$ of a minimum vertex cover $U$, solving the given instance by calling either \alg{ProcedureA} (if $|U'|\leq\frac{vc}{2}$) or \alg{ProcedureB} (if $|U'|>\frac{vc}{2}$).

\begin{algorithm}[!ht]
\caption{\alg{ALG}($G=(V,E),w: V\rightarrow \mathbb{R}^{\geq 1},\alpha$)}
\begin{algorithmic}[1]
\STATE Compute a minimum vertex cover $U$ of $G$ in time $O^*(\gamma^{vc})$ and polynomial space.
\STATE Compute a minimum vertex cover $U'$ of $G[U]$ in time $O^*(\gamma^{vc})$ and polynomial~space.
\IF{$|U'|\leq \frac{vc}{2}$}
	\STATE Return \alg{ProcedureA}$(G,w,\alpha,U,U',\emptyset,\emptyset)$.
\ELSE
	\STATE Return \alg{ProcedureB}$(G,w,U,\emptyset,\emptyset)$.
\ENDIF
\end{algorithmic}
\end{algorithm}

Now, we turn to prove Theorem \ref{theorem:approx}.

\begin{proof}
The correctness of the approximation ratio immediately follows from Lemmas \ref{lemma:proA} and \ref{lemma:proB}. Observe that the computation of $U'$ can indeed be performed in time $O^*(\gamma^{vc})$ since $U'\subseteq U$, and therefore $|U'|\leq|U|\leq vc$. Moreover, since $\gamma<1.274$, by Lemmas \ref{lemma:proA} and \ref{lemma:proB}, the running time is bounded by $O^*(\max\{(\frac{1}{x^x(1-x)^{1-x}})^{vc},3^{\frac{vc}{3}}\})$. Since $\alpha$ is chosen such that $(\frac{1}{x^x(1-x)^{1-x}})^{vc}\geq 3^{\frac{vc}{3}}$, the above running time is bounded by $O^*((\frac{1}{x^x(1-x)^{1-x}})^{vc})$.\qed
\end{proof}

In the rest of this section, we give necessary information on the bounded search tree technque (Section \ref{sec:bounded}), after which we develop \alg{ProcedureA} (Section \ref{sec:proA}) and \alg{ProcedureB} (Section \ref{sec:proB}).

\subsection{The Bounded Search Tree Technique}\label{sec:bounded}

Bounded search trees form a fundamental technique in the design of recursive parameterized algorithms (see \cite{newdfsbook}). Roughly speaking, in applying this technique, one defines a list of rules of the form \alg{Rule X. [condition] action}, where \alg{X} is the number of the rule in the list. At each recursive call (i.e., a node in the search tree), the algorithm performs the action of the first rule whose condition is satisfied. If by performing an action, the algorithm recursively calls itself at least twice, the rule is a {\em branching rule}, and otherwise it is a {\em reduction rule}. We only consider actions that increase neither the parameter nor the size of the instance, and decrease at least one of them. Observe that, at any given time, we only store the path from the current node to the root of the search tree (rather than the entire tree).

The running time of the algorithm can be bounded as follows. Suppose that the algorithm executes a branching rule where it recursively calls itself $\ell$ times, such that in the $i^\mathrm{th}$ call, the current value of the parameter decreases by $b_i$. Then, $(b_1,b_2,\ldots,b_{\ell})$ is called the {\em branching vector} of this rule. We say that $\beta$ is the {\em root} of $(b_1,b_2,\ldots,b_{\ell})$ if it is the (unique) positive real root of $x^{b^*} = x^{b^*-b_1} + x^{b^*-b_2} + \ldots + x^{b^*-b_{\ell}}$, where $b^* = \max\{b_1,b_2,\ldots,b_{\ell}\}$. If $b>0$ is the initial value of the parameter, and the algorithm (a) returns a result when (or before) the parameter is negative, (b) only executes branching rules whose roots are bounded by a constant $c$, and (c) only executes rules associated with actions performed in polynomial time, then its running time is bounded by $O^*(c^b)$.

In some of the leaves of a search tree corresponding to our first procedure, we execute rules associated with actions that are not performed in polynomial time (as required in the condition (c) above).
We show that for every such leaf $\ell$ in the search tree, we execute an action that can be performed in time $O^*(g(\ell))$ for some function $g$. Then, letting $L$ denote the set of leaves in the search tree whose actions are not performed in polynomial time, we have that the running time of the algorithm is bounded by $O^*(c^b+\sum_{\ell\in L}g(\ell))$.

\subsection{ProcedureA: The Proof of Lemma \ref{lemma:proA}}\label{sec:proA}

The procedure \alg{ProcedureA} is based on the bounded search tree technique. Each call is of the form \alg{ProcedureA}$(G,w,\alpha,U,U',I,O)$, where $G,w,\alpha,U$ and $U'$ always remain the parameters with whom the procedure was called by \alg{ALG}, while $I$ and $O$ are disjoint subsets of $U$ to which \alg{ProcedureA} adds elements as the execution progresses (initially, $I=O=\emptyset$). Roughly speaking, the sets $I$ and $O$ indicate that currently we are only interested in examining minimal vertex covers that contain all of the vertices in $I$ and none of the vertices in $O$. Formally, we prove the following result.

\begin{lemma}\label{lemma:presProA}
\alg{ProcedureA} returns a minimal vertex cover $S$ that satisfies the following condition:
\begin{itemize}
\item If there is a minimal vertex cover $S^*$ of weight $opt_w$ such that $I\subseteq S^*$ and $O\cap S^*=\emptyset$, then the weight of $S$ is at least $\alpha\cdot opt_w$.
\end{itemize}
Moreover, each leaf, associated with an instance $(G,w,\alpha,U,U',I',O')$, corresponds to a unique pair $(I',O')$, and its action can be performed in polynomial time if $I'\cup O'\neq U'$, and in time $O^*(|\{\widetilde{U}\subseteq U: I'\subseteq \widetilde{U}, O'\cap \widetilde{U}=\emptyset, |\widetilde{U}|\geq (1-x)vc\}|)$ otherwise.
\end{lemma}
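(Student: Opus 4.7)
The plan is to establish the lemma by induction on the recursion tree of \alg{ProcedureA}, together with a case analysis of the two possible leaf types. Since the procedure has not yet been described in the paper, the argument must rest on the structural features already committed to by the lemma statement: the sets $I$ and $O$ are grown along each root-to-leaf path by branching steps that only affect vertices of $U'$, and a non-polynomial leaf action can occur only once the decision on $U'$ is complete, i.e., $I'\cup O'=U'$.

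I would first prove the uniqueness of the pair $(I',O')$ at each leaf. Every branching rule commits at least one previously undecided vertex of $U'$ to either $I$ or $O$, with the branches at a given node realizing mutually exclusive assignments for that vertex. Hence, any two distinct leaves must diverge at their lowest common ancestor and therefore inherit distinct $(I',O')$.

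For the approximation guarantee, I would induct on the height of the subtree rooted at the current call, hypothesizing that a witnessing $S^*$ of weight $opt_w$ with $I\subseteq S^*$ and $O\cap S^*=\emptyset$ exists. At a branching rule, the children partition the extensions of $(I,O)$ inside $U'$, so at least one child still satisfies the hypothesis with respect to the same $S^*$; the inductive hypothesis then yields from that subtree a minimal vertex cover of weight at least $\alpha\cdot opt_w$, which is preserved when the parent returns the heaviest of its children's outputs. The base case is the leaf analysis treated below.

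The main obstacle is precisely this leaf analysis. For a leaf with $I'\cup O'\neq U'$, the call must have been triggered by a rule that either certifies that no consistent $S^*$ can exist (in which case the guarantee is vacuous) or directly constructs, in polynomial time, a minimal vertex cover whose weight meets the required bound; I would verify both sub-cases against the explicit rule specifications once the procedure is given. For a leaf with $I'\cup O'=U'$, the restriction of $S^*$ to $U'$ is fully fixed, and every admissible candidate $S^*$ arises from a unique $\widetilde U\subseteq U$ with $I'\subseteq\widetilde U$ and $O'\cap\widetilde U=\emptyset$ by extending through $V\setminus U$ in the canonical minimality-preserving fashion of the observation in Section~\ref{sec:simple}; enumerating such $\widetilde U$'s and, for each, constructing and weighing the induced minimal vertex cover, takes polynomial time per set. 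The delicate point is justifying that $\widetilde U$ with $|\widetilde U|<(1-x)\,vc$ may be skipped: using $1\leq w(v)\leq M$ on $U$, an upper bound of the shape $opt_w\leq M\cdot vc+(|V|-vc)$, and the defining relation between $x$, $\alpha$, and $M$, one shows that any cover induced by so small a $\widetilde U$ has weight strictly below $\alpha\cdot opt_w$, and in fact is beaten by a trivially available alternative the algorithm already holds (for instance, one derived from the full $U$), so it may be discarded without threatening the invariant. This weight bookkeeping, rather than the recursion, is where the threshold $(1-x)\,vc$ enters and where the stated running-time formula for the leaf action is produced.
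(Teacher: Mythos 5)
Your high-level skeleton---induction on the search tree, a witness $S^*$ surviving into at least one branch, uniqueness of $(I',O')$ from the branching structure, and exhaustive enumeration at leaves with $I'\cup O'=U'$---matches the paper's proof. The genuine gap is in the one place you yourself flag as delicate: justifying the threshold $(1-x)vc$. Your claim that ``any cover induced by so small a $\widetilde{U}$ has weight strictly below $\alpha\cdot opt_w$'' cannot be right, since the optimal $S^*$ may itself satisfy $|S^*\cap U|<(1-x)vc$ and certainly has weight $opt_w>\alpha\cdot opt_w$. What must be shown is the converse-flavored statement: \emph{whenever} $|S^*\cap U|<(1-x)vc$, some solution the algorithm holds anyway already has weight at least $\alpha\cdot opt_w$, so the skipped sets are never needed. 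Your proposed bookkeeping, via $opt_w\leq M\cdot vc+(|V|-vc)$ against a fallback ``derived from the full $U$'', cannot deliver this: the weight of $S^*\setminus U$ can dwarf $w(U)$ (e.g., a star, where $U$ is the center and $S^*$ is the set of leaves), so no comparison involving only $vc$, $M$ and $|V|$ yields a constant ratio.

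The missing idea is the construction, at such a leaf, of a specific minimal vertex cover $\widetilde{A}$: take $A=I'\cup(\bigcup_{v\in O'}N(v)\cap U)$, prune it to a minimal vertex cover $A'$ of $G[U]$, and set $\widetilde{A}=A'\cup(\bigcup_{v\in U\setminus A'}N(v)\setminus U)$. For any witness $S^*$ consistent with $(I',O')$ one gets the two containments $\widetilde{A}\cap U\subseteq A\subseteq S^*\cap U$ and $S^*\setminus U\subseteq\widetilde{A}\setminus U$; the second says that $\widetilde{A}$ captures the \emph{entire} part of $S^*$ lying outside $U$. This yields $w(S^*)\leq w(\widetilde{A})+w(S^*\cap U)-w(\widetilde{A}\cap U)$, and then, comparing against $\max\{w(U),w(\widetilde{A})\}$ and using $w(v)\in[1,M]$ together with $|U\setminus S^*|>x\cdot vc$, the bound $\max\{w(U),w(\widetilde{A})\}\geq\frac{1}{2-\frac{x}{M-(M-1)x}}\cdot opt_w=\alpha\cdot opt_w$. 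Without $\widetilde{A}$ and these containments the threshold argument does not close. (Your treatment of the remaining leaves is consistent with the paper: both polynomial-time leaf rules fire only when no $S^*$ consistent with $(I',O')$ can exist, so the guarantee there is vacuous and returning the arbitrary minimal vertex cover $U$ suffices.)
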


Let $vc'=|U'|$. To ensure that \alg{ProcedureA} runs in time $O^*((\frac{1}{x^x(1-x)^{1-x}})^{vc})$, we propose the following measure:

\smallskip
{\noindent{\bf Measure:} $\displaystyle{vc'-|U'\cap(I\cup O)|}$.}
\smallskip

Next, we present each rule within a call \alg{ProcedureA}$(G,w,\alpha,U,U',I,O)$. After presenting a rule, we argue its correctness (see Lemma \ref{lemma:presProA}). Since initially $I=O=\emptyset$, we thus have that \alg{ProcedureA} guarantees the desired approximation ratio. For each branching rule, we also give the root of the corresponding branching vector (with respect to the measure above). We ensure that (1) the largest root we shall get is 2, (2) the procedure stops calling itself recursively, at the latest, once the measure drops to 0, and (3) actions not associated with leaves can be performed in polynomial time.
Observe that initially the measure is $vc'$. Thus, as explained in Section \ref{sec:bounded}, the running time of \alg{ProcedureA} is bounded by $\displaystyle{O^*(2^{vc'} + \sum_{(I',O')\in{\cal P}}|\{\widetilde{U}\subseteq U: I'\subseteq \widetilde{U}, O'\cap \widetilde{U}=\emptyset, |\widetilde{U}|\geq (1-x)vc\}|)}$, where $\cal P$ is the set of all partitions of $U'$ into two sets.
This running time is bounded by $\displaystyle{O^*(2^{vc'} + |\{\widetilde{U}\subseteq U: |\widetilde{U}|\geq (1-x)vc\}|)=O^*(2^{vc'} + \max_{i=(1-x)vc}^{vc}{vc \choose i})}$.
Since $x < \frac{1}{2}$ (by the definition of $x$ and since $\alpha<\displaystyle{\frac{1}{2-\frac{1}{M+1}}}$), $vc'\leq\frac{vc}{2}$ and $\frac{1}{x^x(1-x)^{1-x}}\geq 3^{\frac{1}{3}}> 2^{\frac{1}{2}}$, this running time is further bounded by $\displaystyle{O^*((\frac{1}{x^x(1-x)^{1-x}})^{vc})}$.
Thus, we also have the desired bound for the running time, concluding the correctness of Lemma \ref{lemma:presProA}.


\begin{reducerule}\label{red:stopF1}{\normalfont
[There is $v\in O$ such that $N(v)\cap O\neq\emptyset$]
Return $U$.
}\end{reducerule}

{\noindent In this case there is no vertex cover that does not contain any vertex from $O$, and therefore it is possible to return an arbitrary minimal vertex cover. The action can clearly be performed in polynomial time.}

\begin{reducerule}\label{red:stopF2}{\normalfont
[There is $v\in X$ such that $N(v)\subseteq X$, where $X=I\cup (\bigcup_{u\in O}N(u))$]
Return $U$.
}\end{reducerule}

{\noindent Observe that any vertex cover that does not contain any vertex from $O$, must contain all the neighbors of the vertices in $O$. Thus, any vertex cover that contains all the vertices in $I$ and none of the vertices in $O$, also contains the vertex $v$ and all of its neighbors; therefore, it is not a {\em minimal} vertex cover. Thus, it is possible to return an arbitrary minimal vertex cover. The action can clearly be performed in polynomial time.}

\begin{reducerule}\label{red:stopT}{\normalfont
[$U'=I\cup O$] Perform the following computation.
\begin{enumerate}
\item Let $A=I\cup (\bigcup_{v\in O}N(v)\cap U)$. As long as there is a vertex $v\in A$ such that $N(v)\cap U\subseteq A$, choose such a vertex (arbitrarily) and remove it from $A$. Let $A'$ be the set obtained at the end of this process.
\smallskip
\item Let $\widetilde{A} = A'\cup (\bigcup_{v\in U\setminus A'}N(v)\setminus U)$.
\smallskip
\item Denote ${\cal F}=\{\widetilde{U}\subseteq U: I\subseteq\widetilde{U}, O\cap\widetilde{U}=\emptyset, |\widetilde{U}|\geq (1-x)vc\}$.\footnote{It is not necessary to explicitly store $\cal F$---we only need to iterate over it; therefore, by the pseudocode, it is clear that the space complexity of the action is polynomial.}
\smallskip
\item Initialize $B=U$.
\smallskip
\item For all $F\in{\cal F}$:
	\begin{enumerate}
	\item Let $B_F = F\cup (\bigcup_{v\in U\setminus F}N(v))$
	\item If $B_F$ is a {\em minimal} vertex cover:
		\begin{itemize}
		\item If the weight of $B_F$ is larger than the weight of $B$: Replace $B$ by $B_F$.
		\end{itemize}
	\end{enumerate}
\smallskip	
\item Return the set of maximum weight among $\widetilde{A}$ and $B$.
\end{enumerate}
}\end{reducerule}

{\noindent First, observe that this rule ensure that, at the latest, the procedure stops calling itself recursively once the measure drops to 0. Furthemore, by the pseudocode, the action can be performed in time $O^*(|{\cal F}|)$, which by the definition of ${\cal F}$, is the desired time.
It remains to prove that Lemma \ref{lemma:proA} is correct.}

We begin by considering the set $A$. Since $U'=I\cup O$ is a vertex cover of $G[U]$ and the previous rules were not applied, we have that $A$ is vertex cover of $G[U]$ such that $I\subseteq A$ and $O\cap A=\emptyset$. Thus, by its definition, $A'$ is a {\em minimal} vertex cover of $G[U]$ such that $A'\subseteq A$. Thus, since $U$ is a vertex cover, every edge either has both endpoints in $U$, in which case it has an endpoint in $A'$, or it has exactly one endpoint in $U$ and exactly one endpoint in $V\setminus U$, in which case it has an endpoint that is a vertex in $A'$ or a neighbor in $V\setminus U$ of a vertex in $U\setminus A'$. Therefore, $\widetilde{A}$ is a vertex cover. Moreover, every vertex in $A'$ has a neighbor in $U\setminus A'$ (by the minimality of $A'$) and every vertex in $(\bigcup_{v\in U\setminus A'}N(v)\setminus U)$, by definition, has a neighbor in $U\setminus A'$. Thus, we overall have that $\widetilde{A}$ is a minimal vertex cover such that $\widetilde{A}\cap U\subseteq A$.

Since $\widetilde{A}$ is a minimal vertex cover, by the pseudocode, we return a weight, $W$, of a minimal vertex cover. Assume that there is a minimal vertex cover $S^*$ of weight $opt_w$ such that $I\subseteq S^*$ and $O\cap S^*=\emptyset$. This implies that $A\subseteq S^*\cap U$. Now, to prove Lemma \ref{lemma:proA}, it is sufficient to show that $W\geq\alpha\cdot opt_w$. Denote $F^*=S^*\cap U$. Since $S^*$ is a minimal vertex cover, we have that $S^*=F^*\cup(\bigcup_{v\in U\setminus F^*}N(v))$. If $S^*$ contains at least $(1-x)vc$ elements from $U$, there is an iteration where we examine $F=F^*$, in which case $B_{F^*}=S^*$, and therefore we return $opt_w$.
Thus, we next suppose that $|S^*\cap U| < (1-x)vc$. 
Since $B$ is initially $U$, to prove Lemma \ref{lemma:proA}, it is now sufficient to show that $\max\{w(\widetilde{A}),w(U)\}\geq\alpha\cdot w(S^*)$.

Since $S^*$ is a vertex cover such that $I\subseteq S^*$ and $O\cap S^*=\emptyset$, we have that $\widetilde{A}\cap U\subseteq A\subseteq F^*$. By the definition of $\widetilde{A}$ and since $S^*$ is a {\em minimal} vertex cover, this implies that $S^*\setminus U\subseteq \widetilde{A}\setminus U$. Thus, overall we have that

\[\begin{array}{ll}

\medskip
\displaystyle{\frac{\max\{w(U),w(\widetilde{A})\}}{w(S^*)}} & = \displaystyle{\frac{\max\{w(U),w(\widetilde{A})\}}{w(S^*\setminus U) + w(S^*\cap U)}}
\geq \displaystyle{\frac{\max\{w(U), w(\widetilde{A})\}}{w(\widetilde{A}\setminus U) + w(S^*\cap U)}}\\

\medskip
& = \displaystyle{\frac{\max\{w(U), w(\widetilde{A})\}}{w(\widetilde{A}) + w(S^*\cap U)-w(\widetilde{A}\cap U)}}\\

\medskip
& \geq \displaystyle{\frac{w(U)}{w(U) + w(S^*\cap U)}} = \displaystyle{\frac{1}{2-\frac{w(U\setminus S^*)}{w(U)}}} \geq \displaystyle{\frac{1}{2-\frac{|U\setminus S^*|}{w(S^*\cap U) + |U\setminus S^*|}}}\\

\medskip
& \geq \displaystyle{\frac{1}{2-\frac{x\cdot vc}{M\cdot(1-x)\cdot vc + x\cdot vc}}} = \displaystyle{\frac{1}{2-\frac{x}{M-(M-1)x}}}
\end{array}\]

Since $x=1-\frac{1-\alpha}{M(2\alpha-1)+1-\alpha}$, we have that the expression above is equal to $\alpha$.

\begin{branchrule}{\normalfont
Let $v$ be a vertex in $U'\setminus(I\cup O)$. Return the set of maximum weight among $A$ and $B$, computed in the following branches.
\begin{enumerate}
\item $A\Leftarrow$\alg{ProcedureA}$(G,w,\alpha,U,U',I\cup\{v\},O)$.
\item $B\Leftarrow$\alg{ProcedureA}$(G,w,\alpha,U,U',I,O\cup\{v\})$.
\end{enumerate}
}\end{branchrule}

{\noindent The correctness of Lemma \ref{lemma:presProA} is preserved, since every vertex cover either contains $v$ (an option examined in the first branch) or does not contain $v$ (an option examined in the second branch). Moreover, it is clear that the action can be performed in polynomial time and that the branching vector is (1,1), whose root is 2.}

\subsection{ProcedureB: The Proof of Lemma \ref{lemma:proB}}\label{sec:proB}

This procedure is based on combining an appropriate application of the ideas used by the previous procedure (considering the fact that now the size of any vertex cover of $G[U]$ is larger than $\frac{vc}{2}$) with rules from the algorithm for {\sc VC} by Peiselt \cite{vc2007}. Due to lack of space, the details are given in Appendix \ref{app:proB}.

\bibliographystyle{splncs03}
\bibliography{References}

\newpage

\appendix

\section{Proof of Lemma \ref{lemma:lower2}}\label{app:simple}

Fix $\epsilon>0$. Suppose, by way of contradiction, that there exists an algorithm, \alg{A}, that solves {\sc MMVC} in the restricted setting in time $O^*((2-\epsilon)^{\frac{vc}{2}})$. We aim to show that this implies that there exists an algorithm that solves {\sc HS} in time $O^*((2-\epsilon)^n)$, which contradicts the SETH \cite{cnfSAT}. To this end, we construct an graph $G=(V,E)$ that defines an instance of {\sc MMVC} in the restricted setting:
\begin{itemize}
\item $R_1=\{r_u: u\in U\}$, and $R_2=\{r^c_i: c\in\{1,\ldots,n+1\}, i\in\{1,\ldots,m\}\}$.
\item $L=\{l^c_u: u\in U, c\in\{1,2\}\}$, and $R=R_1\cup R_2$.
\item $V=L\cup R$.
\item $E=\{\{l^c_u,r_u\}: u\in U, c\in \{1,2\}\}\cup\{\{l^1_u,r^c_i\}: u\in F_i, i\in\{1,\ldots,m\}, c\in\{1,\ldots,n+1\}\}$.
\end{itemize} 

It is enough to show that (1) $vc\leq 2n$, and (2) the solution for $(U,{\cal F})$ is $q$ {\em iff} the solution for $(G,w)$ is $(n-q) + |R|$. 
Indeed, this implies that we can solve {\sc HS} by constructing the above instance in polynomial time, running \alg{A} in time $O^*((2-\epsilon)^n)$ (since $vc\leq 2n$), obtaining an answer of the form $(n-q) + |R|$, and returning $q$.

First, observe that $L$ is a minimal vertex cover of $G$: it is a vertex cover, since every edge has exactly one endpoint in $L$ and one endpoint in $R$, and it is minimal, since every vertex in $L$ has an edge that connects is to at least one vertex in $R$. Therefore, $vc\leq |L|=2n$.

Now, we turn to prove the second item. For the first direction, let $q$ be the solution to $(U,{\cal F})$, and let $U'$ be a corresponding hitting set of size $q$. Consider the vertex set $S=\{l^c_u: u\in U\setminus U', c\in\{1,2\}\}\cup \{r_u: u\in U'\}\cup R_2$. Observe that $|S| = 2|U\setminus U'| + |U'| + |R_2| = |U\setminus U'| + |R_1| + |R_2| = (n-q) + |R|$. The set $S$ is a vertex cover: since $R_2\subseteq S$, every edge in $G$ that does not have an endpoint in $R_2$ is of the form $\{l^c_u,r_u\}$, and for every $u\in U$ either [$l^1_u,l^2_u\in S$ (if $u\notin U'$)] or [$r_u\in S$ (if $u\in U'$)]. Moreover, $S$ is a {\em minimal} vertex cover. Indeed, we cannot remove any vertex $l_u\in S\cap L$ or $r_u\in S\cap R_1)$ and still have a vertex cover, since then the edge of the form $\{l^c_u,r_u\}$ is not covered. Also, we cannot remove any vertex $r^c_i\in R_2$, since there is a vertex $l^1_u\notin S$ such that $\{l^1_u,r^c_i\}\in E$ (to see this, observe that because $U'$ is a hitting set, there is a vertex $u\in U'\cap F_i$, which corresponds to the required vertex $l^1_u$).

For the second direction, let $p$ be the solution to the instance defined by $G$, and let $S$ be a corresponding minimal vertex cover of size $p$. Clearly, $p\geq |R| = n + m(n+1)$, since $R$ is a minimal vertex cover of $G$. Observe that for all $u\in U$, by the definition of $G$ and since $S$ is a minimal vertex cover, we can assume WLOG that either [$l^1_u,l^2_u\in S$ and $r_u\notin S$] or [$l^1_u,l^2_u\notin S$ and $r_u\in S$]. Indeed, to see this, note that if $l^1_u,r_u\in S$, then $l^2_u\notin S$, and thus we can replace (in $S$) $r_u$ by $l^2_u$ and still have a solution. Suppose that there exists $r^c_i\in R_2\setminus S$. Then, for all $u\in F_i$, we have that $l^1_u\in S$ (by the definition of $G$ and since $S$ is a vertex cover), which implies that for all $c\in\{1,\ldots,n+1\}$, we have that $r^c_i\notin S$ (since $S$ is a {\em minimal} vertex cover). Thus, $p = |S\cap (L\cup R_1)| + |S\cap R_2|\leq 2n + (m-1)(n+1) < n + m(n+1)$, which is a contradiction. Therefore, $R_2\subseteq S$.

Denote $U'=\{u: r_u\in S\cap R_1\}$. By the above discussion, $p = |S\cap L| + |S\cap R_1|  + |R_2| = \frac{1}{2}|S\cap L|+|R_1|+|R_2|=(n-|S\cap R_1|)+|R|$. Denoting $|U'|=q$, we have that $p= (n-q) + |R|$. Thus, it remains to show that $U'$ is a hitting set. Suppose, by way of contradiction, that $U'$ is not a hitting set. Thus, there exists $F_i\in{\cal F}$ such that for all $u\in F_i$, we have that $u\notin U'$. By the definition of $U'$, this implies that for all $u\in F_i$, we have that $l^1_u\in S$. Thus, $N(r^1_i)\subseteq S$, while $r^1_i\in S$ (since we have shown that $R_2\subseteq S$), which is a contradiction to the fact that $S$ is a {\em minimal} vertex cover.\qed

\section{ProcedureB: The Proof of Lemma \ref{lemma:proB} (Cont.)}\label{app:proB}

The procedure \alg{ProcedureB} is based on the bounded search tree technique. Each call is of the form \alg{ProcedureB}$(G,w,U,I,O)$, where $G,w$ and $U$ always remain the parameters with whom the procedure was called by \alg{ALG}, while $I$ and $O$ are disjoint subsets of $U$ to which \alg{ProcedureB} adds elements as the execution progresses (initially, $I=O=\emptyset$). As in the case of \alg{ProcedureA}, the sets $I$ and $O$ indicate that currently we are only interested in examining minimal vertex cover that contains all of the vertices in $I$ and none of the vertices in $O$. Formally, we prove the following result.

\begin{lemma}\label{lemma:presProB}
\alg{ProcedureB} returns a minimal vertex cover $S$ that satisfies the following condition:
\begin{itemize}
\item If there is a minimal vertex cover $S^*$ of weight $opt_w$ such that $I\subseteq S^*$ and $O\cap S^*=\emptyset$, then the weight of $S$ is at least $\displaystyle{\frac{1}{2 - \frac{1}{M+1}}}\cdot opt_w$.
\end{itemize}
\end{lemma}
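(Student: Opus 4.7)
My plan is to prove Lemma~\ref{lemma:presProB} by induction on the recursion tree of \alg{ProcedureB}, grouping its rules into branching, infeasibility-detecting reduction, and leaf-terminating rules. The scheme parallels the proof of Lemma~\ref{lemma:presProA}, but must be adapted to the regime $|U'| > vc/2$, in which the exhaustive enumeration over large subsets of $U$ used by \alg{ProcedureA} is no longer affordable within the $O^*(3^{vc/3})$ time budget. For each branching rule, correctness transfers immediately: if the rule partitions the possible status of some $v \in U$ between ``$v \in S^*$'' and ``$v \notin S^*$'', then any minimal vertex cover $S^*$ compatible with the current $(I, O)$ matches the extended pair $(I', O')$ of exactly one child; by the inductive hypothesis that child returns a set of weight at least $\frac{M+1}{2M+1} \cdot opt_w$, and since \alg{ProcedureB} returns the heaviest set among its children, the guarantee is preserved. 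Infeasibility-detecting reductions, analogous to Reduction Rules~\ref{red:stopF1} and~\ref{red:stopF2}, may return the initial minimum vertex cover $U$: minimum vertex covers are inclusion-minimal, so $U$ is itself a minimal vertex cover, and the conditional in the lemma is vacuous. The Peiselt-style rules from \cite{vc2007} preserve the invariant in the same way, because they never eliminate a feasible $S^*$.

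The decisive step is the terminal rule, which I would model on Reduction Rule~\ref{red:stopT}. Starting from $A = I \cup (\bigcup_{v \in O} N(v) \cap U)$, iteratively delete vertices whose $U$-neighbors already lie in $A$ to obtain a minimal vertex cover $A'$ of $G[U]$, and set $\widetilde{A} = A' \cup (\bigcup_{v \in U \setminus A'} N(v) \setminus U)$. The arguments in the proof of Lemma~\ref{lemma:presProA} establish that $\widetilde{A}$ is a minimal vertex cover of $G$, that $\widetilde{A} \cap U \subseteq S^* \cap U$, and that $S^* \setminus U \subseteq \widetilde{A} \setminus U$. Returning the heavier among $U$ and $\widetilde{A}$, and applying the case hypothesis $|S^* \cap U| \geq |U'| > \frac{vc}{2}$ together with $w(v) \in [1, M]$ to the chain of inequalities already derived in the proof of Lemma~\ref{lemma:presProA}, one obtains a weight ratio of at least $\frac{1}{2 - 1/(M+1)} = \frac{M+1}{2M+1}$, as required. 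This substitution corresponds to the endpoint $x = \frac{1}{2}$ of the analogous closed form for \alg{ProcedureA}.

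The principal obstacle is exactly the case that \alg{ProcedureA} disposed of by exhaustive enumeration, which is forbidden here by the time budget. The role of the Peiselt-style rules is to carry this burden by sufficiently constraining the terminal configurations that the simple comparison between $U$ and $\widetilde{A}$ already achieves the target ratio, without needing to explicitly enumerate candidate subsets of $U$ of size close to $\frac{vc}{2}$. Verifying that every Peiselt rule preserves feasibility, that the branching rules together partition the relevant choices, and that the resulting leaf configurations admit the $\max\{w(U), w(\widetilde{A})\}$ argument with the specialized $x = \frac{1}{2}$ bound, is the technically delicate portion of the proof; once it is in place, the approximation guarantee reduces to the algebraic specialization of the \alg{ProcedureA} calculation.
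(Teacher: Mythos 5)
Your skeleton matches the paper's proof: branching rules preserve the invariant because the children exhaust the ways $S^*$ can intersect the branched-on vertices; the infeasibility rules return the minimal vertex cover $U$ vacuously; and the leaf rule computes $A$, $A'$, $\widetilde{A}$ and returns the heavier of $\widetilde{A}$ and $U$. But the decisive step---the ratio bound at the leaf---does not go through as you describe it, and this is a genuine gap. You propose to apply ``the case hypothesis $|S^*\cap U|\geq|U'|>\frac{vc}{2}$'' to ``the chain of inequalities already derived in the proof of Lemma~\ref{lemma:presProA}'' at $x=\frac12$. That chain, however, is driven by the opposite hypothesis: in Rule~\ref{red:stopT} the step $\frac{1}{2-\frac{|U\setminus S^*|}{w(S^*\cap U)+|U\setminus S^*|}}\geq\frac{1}{2-\frac{x\cdot vc}{M(1-x)vc+x\cdot vc}}$ relies on $S^*\cap U$ being \emph{small} ($|S^*\cap U|<(1-x)vc$, equivalently $|U\setminus S^*|\geq x\cdot vc$). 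Substituting $x=\frac12$ there requires $|S^*\cap U|\leq\frac{vc}{2}$, which is precisely what the \alg{ProcedureB} case hypothesis excludes. So although the target value $\frac{1}{2-\frac{1}{M+1}}$ does numerically coincide with the $x=\frac12$ endpoint of the \alg{ProcedureA} formula, the derivation does not transfer: the quantity you invoke makes the needed inequality point the wrong way.

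The paper's leaf argument pivots on a different set. Since $\widetilde{A}\cap U$ is a minimal vertex cover of $G[U]$, the case hypothesis gives $|\widetilde{A}\cap U|>\frac{vc}{2}$ and hence $|U\setminus\widetilde{A}|<\frac{vc}{2}$; with $1\leq w(v)\leq M$ this yields $\frac{w(U\cap\widetilde{A})}{w(U)}\geq\frac{vc/2}{vc/2+M\cdot vc/2}=\frac{1}{M+1}$. The denominator is then rearranged via $w(U)+w((S^*\setminus\widetilde{A})\cap U)=2w(U)-w(U\setminus(S^*\setminus\widetilde{A}))\leq 2w(U)-w(U\cap\widetilde{A})$, giving the ratio $\frac{1}{2-\frac{1}{M+1}}$. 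Note also that your attribution of the remaining burden to the Peiselt-style rules is misplaced: in the paper those branching rules, the measure $vc-|U\cap(I\cup O)|-|S(I\cup O)|$, and the leaf condition $U=I\cup O\cup S(I\cup O)$ serve only the $O^*(3^{\frac{vc}{3}})$ time bound; the approximation ratio at the leaf follows entirely from the global fact that every vertex cover of $G[U]$ has more than $\frac{vc}{2}$ vertices, which \alg{ALG} verified before invoking \alg{ProcedureB}. As written, your proposal leaves the one nontrivial inequality unproved and would attempt to derive it from a hypothesis under which it fails.
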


To ensure that \alg{ProcedureB} runs in time $O^*(3^{\frac{vc}{3}})$, we use the measure below:

\smallskip
{\noindent{\bf Measure:} $\displaystyle{vc-|U\cap(I\cup O)|-|S(I\cup O)|}$, where $S(I\cup O)$ contains the vertices in $U\setminus(I\cup O)$ that do not have a neighbor in $U\setminus (I\cup O)$.}
\smallskip

Next, we present each rule within a call \alg{ProcedureB}$(G,w,U,I,O)$. After presenting a rule, we argue its correctness (see Lemma \ref{lemma:presProB}). Since initially $I=O=\emptyset$, we thus have that \alg{ProcedureB} guarantees the desired approximation ratio. For each branching rule, we also give the root of the corresponding branching vector (with respect to the measure above). We ensure that (1) the largest root we shall get is at most $3^{\frac{1}{3}}$, (2) the procedure stops calling itself recursively, at the latest, once the measure drops to 0, and (3) the procedure only executes rules whose actions can be performed in polynomial time. Observe that initially the measure is $vc$. Thus, as explained in Section \ref{sec:bounded}, the running time of \alg{ProcedureB} is bounded by $O^*(3^{\frac{vc}{3}})$.

\setcounter{reducerule}{0}
\begin{reducerule}{\normalfont
[There is $v\in O$ such that $N(v)\cap O\neq\emptyset$]
Return $U$.
}\end{reducerule}

{\noindent Follow the explanation given for Rule \ref{red:stopF1} of \alg{ProcedureA}.}

\begin{reducerule}{\normalfont
[There is $v\in X$ such that $N(v)\subseteq X$, where $X=I\cup (\bigcup_{u\in O}N(u))$]
Return $U$.
}\end{reducerule}

{\noindent Follow the explanation given for Rule \ref{red:stopF2} of \alg{ProcedureA}.}

\begin{reducerule}{\normalfont
[$U=I\cup O\cup S(I\cup O)$] Perform the following computation.
\begin{enumerate}
\item Let $A=I\cup (\bigcup_{v\in O}N(v)\cap U)$. As long as there is a vertex $v\in A$ such that $N(v)\cap U\subseteq A$, choose such a vertex (arbitrarily) and remove it from $A$. Let $A'$ be the set obtained at the end of this process.
\smallskip
\item Let $\widetilde{A} = A'\cup (\bigcup_{v\in U\setminus A'}N(v)\setminus U)$.
\smallskip
\item Return the set of maximum weight among $\widetilde{A}$ and $U$.
\end{enumerate}
}\end{reducerule}

{\noindent Observe that this rule ensure that, at the latest, the procedure stops calling itself recursively once the measure drops to 0. We next prove that Lemma \ref{lemma:proB} is correct. In a manner similar to the explanation following Rule \ref{red:stopT}, we have that $\widetilde{A}$ is a minimal vertex cover such that $\widetilde{A}\cap U\subseteq A$. In particular, $\widetilde{A}\cap U$ is a minimal vertex cover of $G[U]$, and therefore its size is larger than $\frac{vc}{2}$ (otherwise \alg{ALG} would have called \alg{ProcedureA}). Thus, by the pseudocode, we return a weight of a minimal vertex cover. Assume that there is a minimal vertex cover $S^*$ of weight $opt_w$ such that $I\subseteq S^*$ and $O\cap S^*=\emptyset$. This implies that $A\subseteq S^*\cap U$. Now, it is sufficient to show that $\max\{w(\widetilde{A}),w(U)\}\geq\alpha\cdot w(S^*)$. As in the explanation following Rule \ref{red:stopT}, $S^*\setminus U\subseteq \widetilde{A}\setminus U$. Thus, we have that}

\[\begin{array}{ll}

\medskip
\displaystyle{\frac{\max\{w(U),w(\widetilde{A})\}}{w(S^*)}} & = \displaystyle{\frac{\max\{w(U),w(\widetilde{A})\}}{w(S^*\setminus U) + w(S^*\cap U)}}
\geq \displaystyle{\frac{\max\{w(U), w(\widetilde{A})\}}{w(\widetilde{A}\setminus U) + w(S^*\cap U)}}\\

\medskip
& = \displaystyle{\frac{\max\{w(U), w(\widetilde{A})\}}{w(\widetilde{A}) + w(S^*\cap U)-w(\widetilde{A}\cap U)}}\\

\medskip
& \geq \displaystyle{\frac{w(U)}{w(U) + w((S^*\setminus\widetilde{A})\cap U)}} = \displaystyle{\frac{w(U)}{2w(U) - w(U\setminus(S^*\setminus\widetilde{A}))}}\\

\medskip
& = \displaystyle{\frac{1}{2 - \frac{w(U\setminus(S^*\setminus\widetilde{A}))}{w(U)}}} \geq \displaystyle{\frac{1}{2 - \frac{w(U\cap\widetilde{A})}{w(U)}}}\\

& \geq \displaystyle{\frac{1}{2 - \frac{\frac{vc}{2}}{M\cdot\frac{vc}{2}+\frac{vc}{2}}}} = \displaystyle{\frac{1}{2 - \frac{1}{M+1}}}
\end{array}\]

Next, we denote $\widehat{U}= U\setminus(I\cup O)$. In the remaining (branching) rules, we first branch on neighbors of leaves in $G[\widehat{U}]$ whose degree in this subgraph is at least two, then on leaves in $G[\widehat{U}]$, then of vertices of degree at least three in $G[\widehat{U}]$, and finally (in the last two rules) on the remaining vertices in $G[\widehat{U}]$ that are not isolated in this subgraph. Although we can merge some of them, we present them separately for the sake of clarity.

\begin{branchrule}\label{rule:wvcneileaf}
{\normalfont [There are $v,u\in \widehat{U}$ such that $N(u)\cap \widehat{U} = \{v\}$ and $|N(v)\cap \widehat{U}|\geq 2$] Return the set of maximum weight among $A$ and $B$, computed in the following branches.
\begin{enumerate}
\item $A\Leftarrow$\alg{ProcedureB}($G,w,U,I\cup\{v\},O$).
\item $B\Leftarrow$\alg{ProcedureB}($G,w,U,I\cup (N(v)\cap \widehat{U}),O\cup\{v\}$).
\end{enumerate}}
\end{branchrule}

{\noindent The correctness of Lemma \ref{lemma:presProB} is preserved since every vertex cover contains either $v$ (an option examined in the first branch) or does not contain $v$, in which case it must contain all the neighbors of $v$ (an option examined in the second branch).}

We get a branching vector that is at least as good as $(|\{v,u\}|,|(N(v)\cap \widehat{U})\cup\{v\}|)$. Indeed, in the first branch, $v$ is inserted to $U$ and $u$ is inserted to $S(I\cup O)$, and in the second branch, $(N(v)\cap \widehat{U})$ is inserted to $I$ and $v$ is inserted to $O$. Since this branching vector is at least as good as $(2,3)$ (because $|N(v)\cap \widehat{U}|\geq 2$), we get a root that is at most $3^{\frac{1}{3}}$.

\begin{branchrule}\label{rule:wvcneileaf2}
{\normalfont [There are $v,u\in \widehat{U}$ such that $N(u)\cap \widehat{U} = \{v\}$] Return the set of maximum weight among $A$ and $B$, computed in the following branches.
\begin{enumerate}
\item $A\Leftarrow$\alg{ProcedureB}($G,w,U,I\cup\{v\},O$).
\item $B\Leftarrow$\alg{ProcedureB}($G,w,U,I\cup (N(v)\cap \widehat{U}),O\cup\{v\}$).
\end{enumerate}}
\end{branchrule}

{\noindent For correctness, follow the explanation given in the previous rule. Also, since the previous rule did not apply, $N(v)\cap \widehat{U}=\{u\}$. Thus, at each branch, one vertex in $\{v,u\}$ is inserted to $I$, and the other is inserted to $S(I\cup O)\cup O$. We get the branching vector $(|\{v,u\}|,|\{v,u\}|)=(2,2)$, whose root is at most $3^{\frac{1}{3}}$. We note that we did not merge this rule with the previous one, since in the last rule we use the fact that Rule \ref{rule:wvcneileaf} has a branching vector better than $(2,2)$.}

\begin{branchrule}\label{rule:tvc3}
{\normalfont [There is $v\in \widehat{U}$ such that $|N(v)\cap \widehat{U}| \geq 3$] Return the set of maximum weight among $A$ and $B$, computed in the following branches.
\begin{enumerate}
\item $A\Leftarrow$\alg{ProcedureB}($G,w,U,I\cup\{v\},O$).
\item $B\Leftarrow$\alg{ProcedureB}($G,w,U,I\cup (N(v)\cap \widehat{U}),O\cup\{v\}$).
\end{enumerate}}
\end{branchrule}

{\noindent For correctness, follow the explanation given in Rule \ref{rule:wvcneileaf}. Clearly, we get the branching vector $(|\{v\}|,|(N(v)\cap \widehat{U})\cup\{v\}|)$. This branching vector is at least as good as $(1,4)$ (since $|N(v)\cap \widehat{U}| \geq 3$), and thus we get a root that is at most $3^{\frac{1}{3}}$.}

\begin{branchrule}
{\normalfont [There are $v,u,r\in \widehat{U}$ such that $N(v)\cap \widehat{U} = \{u,r\}$, $N(u)\cap \widehat{U} = \{v,r\}$ and $N(r)\cap \widehat{U} = \{v,u\}$] Return the set of maximum weight among $A$, $B$ and $C$, computed in the following branches.
\begin{enumerate}
\item $A\Leftarrow$\alg{ProcedureB}($G,w,U,I\cup\{v,u\},O$).
\item $B\Leftarrow$\alg{ProcedureB}($G,w,U,I\cup\{v,r\},O$).
\item $C\Leftarrow$\alg{ProcedureB}($G,w,U,I\cup\{u,r\},O$).
\end{enumerate}
}\end{branchrule}

{\noindent The correctness of Lemma \ref{lemma:presProB} is preserved since every vertex cover contains at least two vertices from $\{v,u,r\}$, thus it contains $v,u$ (an option examined in the first branch),\footnote{It might also contain $r$: once we insert $v,u$ to $I$, we do {\em not} insert $r$ to $O$.} or it contains $v,r$ (an option examined in the second branch), or it contains $u,r$ (an option examined in the third branch). At each branch, two vertices of the triangle are inserted to $I$, and the other one is inserted to $S(I\cup O)$. Thus, we get the branching vector $(3,3,3)$, whose root is $3^{\frac{1}{3}}$.}

\begin{branchrule}
{\normalfont Let $v$ be a vertex in $\widehat{U}$ such that $|N(v)\cap \widehat{U}|=2$. Return the set of maximum weight among $A$ and $B$, computed in the following branches.
\begin{enumerate}
\item $A\Leftarrow$\alg{ProcedureB}($G,w,U,I\cup\{v\},O$).
\item $B\Leftarrow$\alg{ProcedureB}($G,w,U,I\cup (N(v)\cap \widehat{U}),O\cup\{v\}$).
\end{enumerate}}
\end{branchrule}

{\noindent For correctness, follow the explanation given in Rule \ref{rule:wvcneileaf}. In this rule, $G[\widehat{U}]$ does not contain vertices of degree at least three (due to Rule \ref{rule:tvc3}), triangles (due to the previous rule) or leaves (due to Rules \ref{rule:wvcneileaf} and \ref{rule:wvcneileaf2}). Thus, the connected components in $G[\widehat{U}]$ are only cycles, each on at least four vertices. Thus, after inserting $v$ to $I$ (in the first branch), we can next apply Rule \ref{rule:wvcneileaf}. This results in a branching vector that is at least as good as $(1+(2,3),3)=(3,4,3)$, whose root is at most~$3^{\frac{1}{3}}$.}

\end{document}